%File: anonymous-submission-latex-2024.tex
\documentclass[letterpaper]{article} % DO NOT CHANGE THIS
\usepackage{main}  % DO NOT CHANGE THIS
\usepackage{times}  % DO NOT CHANGE THIS
\usepackage{helvet}  % DO NOT CHANGE THIS
\usepackage{courier}  % DO NOT CHANGE THIS
\usepackage[hyphens]{url}  % DO NOT CHANGE THIS
\usepackage{graphicx} % DO NOT CHANGE THIS
\urlstyle{rm} % DO NOT CHANGE THIS
  % DO NOT CHANGE THIS
\usepackage{natbib}  % DO NOT CHANGE THIS AND DO NOT ADD ANY OPTIONS TO IT
\usepackage{caption} % DO NOT CHANGE THIS AND DO NOT ADD ANY OPTIONS TO IT
\frenchspacing  % DO NOT CHANGE THIS
\setlength{\pdfpagewidth}{8.5in} % DO NOT CHANGE THIS
\setlength{\pdfpageheight}{11in} % DO NOT CHANGE THIS
%
% These are recommended to typeset algorithms but not required. See the subsubsection on algorithms. Remove them if you don't have algorithms in your paper.
% \usepackage{algorithm}
% \usepackage{algorithmic}

%%%%

%%%%%

%%% ADDITIONAL PACKAGES
\usepackage[utf8]{inputenc}
\usepackage[english]{babel}
\usepackage{amssymb,amsmath,amsthm}
\usepackage{subfig}
\usepackage{graphicx}
\usepackage{amsfonts,xcolor,colortbl} % \c@subfigure
\usepackage{xspace}

\newtheorem{theorem}{Theorem}
\newtheorem{definition}[theorem]{Definition}

\newtheorem{lemma}[theorem]{Lemma}
\newtheorem{proposition}[theorem]{Proposition}

\newcommand{\clip}{\mathop{\textrm{clip}}}
\newcommand{\sign}{\mathop{\textrm{sign}}}

\usepackage{algorithm}
\usepackage{algpseudocode}

\newcommand{\uvmeaniter}{\textsc{UVMRec}}
\newcommand{\uvmeanstep}{\textsc{UVM}}

\newcommand{\stratcoin}{\textsc{StratCoinpress}}

\newcommand{\E}{\mathbb{E}}

\newcommand*{\eg}{e.g.,\xspace}
\newcommand*{\ie}{i.e.,\xspace}
\newcommand*{\Ie}{I.e.,\xspace}

\usepackage{enumitem}

%%%%
\usepackage{booktabs}

%%%%%5

%
% These are are recommended to typeset listings but not required. See the subsubsection on listing. Remove this block if you don't have listings in your paper.
\usepackage{newfloat}
\usepackage{listings}
\DeclareCaptionStyle{ruled}{labelfont=normalfont,labelsep=colon,strut=off} % DO NOT CHANGE THIS
\lstset{%
	basicstyle={\footnotesize\ttfamily},% footnotesize acceptable for monospace
	numbers=left,numberstyle=\footnotesize,xleftmargin=2em,% show line numbers, remove this entire line if you don't want the numbers.
	aboveskip=0pt,belowskip=0pt,%
	showstringspaces=false,tabsize=2,breaklines=true}
\floatstyle{ruled}
\newfloat{listing}{tb}{lst}{}
\floatname{listing}{Listing}
%
% Keep the \pdfinfo as shown here. There's no need
% for you to add the /Title and /Author tags.
\pdfinfo{
/TemplateVersion (2024.1)
}

\setcounter{secnumdepth}{0} %May be changed to 1 or 2 if section numbers are desired.

% The file aaai24.sty is the style file for AAAI Press
% proceedings, working notes, and technical reports.
%

% Title

% Your title must be in mixed case, not sentence case.
% That means all verbs (including short verbs like be, is, using,and go),
% nouns, adverbs, adjectives should be capitalized, including both words in hyphenated terms, while
% articles, conjunctions, and prepositions are lower case unless they
% directly follow a colon or long dash
\title{A Simple and Practical Method for Reducing\\the Disparate Impact of Differential Privacy}
\author{
    %Authors
    % All authors must be in the same font size and format.
    Lucas Rosenblatt\textsuperscript{\rm 1}, Julia Stoyanovich\textsuperscript{\rm 1}, Christopher Musco\textsuperscript{\rm 1}
}
\affiliations{
    %Afiliations
    \textsuperscript{\rm 1}New York University\\
    % If you have multiple authors and multiple affiliations
    % use superscripts in text and roman font to identify them.
    % For example,

    % Sunil Issar\textsuperscript{\rm 2},
    % J. Scott Penberthy\textsuperscript{\rm 3},
    % George Ferguson\textsuperscript{\rm 4},
    % Hans Guesgen\textsuperscript{\rm 5}
    % Note that the comma should be placed after the superscript

    %370 Jay Street \\
    %Brooklyn, New York 11201 USA\\
    % email address must be in roman text type, not monospace or sans serif
    \{lucas.rosenblatt, stoyanovich, cmusco\}@nyu.edu
%
% See more examples next
}

%Example, Single Author, ->> remove \iffalse,\fi and place them surrounding AAAI title to use it
\iffalse
\title{A Simple and Practical Method for Reducing the Disparate Impact of Differential Privacy}
\author {
    Author Name
}
\affiliations{
    Affiliation\\
    Affiliation Line 2\\
    name@example.com
}
\fi

\iffalse
%Example, Multiple Authors, ->> remove \iffalse,\fi and place them surrounding AAAI title to use it
\title{A Simple and Practical Method for Reducing the Disparate Impact of Differential Privacy}
\author {
    % Authors
    Lucas Rosenblatt \textsuperscript{\rm 1},
    Julia Stoyanovich\textsuperscript{\rm 2},
    Christopher Musco\textsuperscript{\rm 1}
}
\affiliations {
    % Affiliations
    \textsuperscript{\rm 1}New York University
    \\thirdAuthor@affiliation1.com
}
\fi

% REMOVE THIS: bibentry
% This is only needed to show inline citations in the guidelines document. You should not need it and can safely delete it.
\usepackage{bibentry}
% END REMOVE bibentry

\begin{document}

\maketitle

\begin{abstract}
Differentially private (DP) mechanisms have been deployed in a variety of high-impact social settings (perhaps most notably by the U.S. Census).
Since all DP mechanisms involve adding noise to results of statistical queries, they are expected to impact our ability to accurately analyze and learn from data, in effect trading off privacy with utility. Alarmingly, the impact of DP on utility can vary significantly among different sub-populations.
A simple way to reduce this disparity is with \emph{stratification}. First compute an independent private estimate for each group in the data set (which may be the intersection of several protected classes), then, to compute estimates of global statistics, appropriately recombine these group estimates. Our main observation is that naive stratification often yields high-accuracy estimates of population-level statistics, without the need for additional privacy budget.
We support this observation theoretically and empirically.
Our theoretical results center on the private mean estimation problem, while our empirical results center on extensive experiments on private data synthesis to demonstrate the effectiveness of stratification on a variety of private mechanisms. Overall, we argue that this straightforward approach provides a strong baseline against which future work on reducing utility disparities of DP mechanisms should be compared.
\end{abstract}

\section{Introduction}
\label{sec:intro}
Two moral and legal imperatives, data privacy and algorithmic equity, have received significant recent research attention. For ensuring data privacy, differential privacy (DP) has emerged as a gold standard technique \citep{dwork2014algorithmic}.\footnote{We will use ``DP'' to mean ``differential privacy'' or ``differentially private,'' depending on the context.} Academics and practitioners alike use DP algorithms to solve problems with sensitive data; notably, big tech companies like Google, Microsoft and Apple rely on differential privacy for protecting customer data \citep{erlingsson2014rappor, ding2017collecting, cormode2018privacy}. One recent high-profile use of DP was in the 2020 United States Census, which includes statistical disclosures with data for millions of Americans \citep{bureau2021disclosure,christ2022differential,groshen2022disclosure,hawes2020implementing}.
Additionally, a wide variety of popular algorithms have DP versions, from fundamental statistical methods \citep{dwork2014algorithmic}, to machine learning \citep{ji2014differential} and even deep learning \citep{abadi2016deep} techniques.

However, DP can be in tension with algorithmic equity, where, unlike for privacy, there is no ``gold standard'' definition, and where potential harms are murky and plentiful \citep{corbett2018measure, mitchell2021algorithmic}. The 2020 Census, for example, was criticized over concerns that DP techniques would have adverse impacts on estimating demographic proportions \citep{ruggles2019differential} and on redistricting \citep{kenny2021use}. In fact, much attention has been paid lately to disparities in performance of DP mechanisms in a variety of settings, and to potential harms resulting from such disparities~\cite{fioretto2022differential}.  Many of these works demonstrate disparities in performance for different demographic groups within the data~\cite{bagdasaryan2019differential,ganev2022robin}. In-line with recent literature, we broadly refer to such disparities as the \emph{disparate impact} of DP.

As a concrete example, consider the problem of DP data synthesis. Suppose we have a data set $X$ that can be split into $k$ disjoint groups $G_1 \cup \ldots \cup G_k = X$, each of which might contain individuals in the intersection of several protected classes. A concern could be that a private data synthesis method run on $X$ does not faithfully represent the data distribution for some of these groups, which can lead to technical bias~\cite{DBLP:journals/tois/FriedmanN96} --- in the sense of disparities in accuracy, appropriately measured --- when machine learning (ML) models or statistics are fit to the synthetic data \cite{ganev2022robin}.

\paragraph{Contributions}
\label{sec:contributions}
With the above example in mind, our paper analyzes a baseline approach to address the disparate impact of DP mechanisms: \textit{stratification}. We note that general stratification-based methods are fundamental in statistics and may already be used by practitioners in privatizing data; however, we do not know of any work that \textit{formally employs stratification to address the disparate impact of DP}. In particular, we could simply run a DP mechanism separately on $G_1, \ldots, G_k$ and report the results. If we have access to publicly available estimates of the size of each group, we could then take a weighted combination of the results to get a statistical estimate, or to generate DP synthetic data, for the global population. For example, we might fit data synthesizers $D_1, \ldots, D_k$ for each group.  Then, to obtain synthetic data for the entirety of $X$, we would sample from synthesizer $D_i$ with probability proportional to $|G_i|/\sum_{i=1}^k |G_i|$. 

Intuitively, stratification minimizes disparate impact: for each group, we represent the data as well as we would have if the other groups did not exist.\footnote{Some DP synthesizers based on the ``\textit{Select, Measure, Project}'' approach, like MWEM \cite{hardt2012simple}, use stratification in an \emph{implicit  way} because they take measurements of data based on marginal queries. However, these algorithms often rely on mutual information or other metrics to select measurements that are most informative about the joint distribution; in doing so, they may not measure marginals with respect to \emph{all} attributes in a data set. This can leave groups vulnerable to disparate impact, and in fact, seems the likely culprit in the observed disparate impacts by \cite{ganev2022robin} and others.} So, the main question we ask is whether or not this simple approach can lead to high-quality estimates of \emph{global} population statistics. Is something lost by  treating individual groups separately? We make progress on answering this question, arguing that the cost of stratification is small
or even negligible.
Specifically, we make the following contributions:

    (1) We validate the stratification strategy by first considering the problem of mean estimation. By making Dirichlet assumptions on the prior distribution of group sizes, we develop a theoretical understanding of the impact of stratification on the population mean estimate, and show that this impact is limited. Furthermore, for some state-of-the-art adaptive mean estimation techniques like \textsc{Coinpress} \citep{biswas2020coinpress}, stratification can even \emph{reduce error in estimating the global population mean}, while also giving estimates for each stratum that are as accurate as one can hope for.
    
    (2) Next, we validate our strategy on stratified DP data synthesizers, motivated by prior work highlighting the disparate impacts of these algorithms \citep{bagdasaryan2019differential,ganev2022robin}. Our results indicate that stratification leads to \emph{minimal overall utility loss for synthetic data} in practical privacy regimes, while also \emph{reducing disparities} in utility across subgroups in the data.

\subsection{Related Works}
\paragraph{Disparate impact of DP} Informally, a DP mechanism exhibits \textit{disparate impact} when it leads to adverse outcomes for historically disadvantaged (\ie protected) population groups, even if the mechanism appears neutral or unbiased on its face. Legally, a practice that adversely impacts protected groups can be considered discriminatory even without obvious categorization or intent to harm \cite{garrow2014toward, feldman2015certifying,barocas2016big}. 

Prior work examining DP mechanisms found concerning disparate impact and other fairness trends. Some of this work has focused on bias introduced by private stochastic gradient descent (DP-SGD) \cite{abadi2016deep,mironov2017renyi}: Empirically, \citet{bagdasaryan2019differential} discovered that DP-SGD amplifies noise in the data and adversely impacts certain subgroups, while theoretically, \citet{tran2021differentially} showed that multiplicative effects on the Hessian loss in DP-SGD affect the proximity of group-specific data to the decision boundary. Others have also investigated DP-SGD's effects on image generation tasks \cite{cheng2021can} and proposed adaptive clipping mechanisms to reduce negative subgroup impacts \cite{xu2021removing}, or suggested that different DP mechanisms for deep learning have reduced disparate impact~\cite{uniyal2021dp}. 
Other works assessing the fairness impacts of DP methods have primarily focused on tabular data and machine learning. Several methods have been proposed to balance the trade-off between privacy and fairness in classification, both theoretically  and empirically \cite{cummings2019compatibility, pujol2020fair,xu2019achieving}. Of particular relevance to our work, \citet{ganev2022robin} demonstrated the ``Matthew Effects'' (\ie better performance for majority groups) of DP tabular synthetic data, which is the main focus of our experiments. 

\paragraph{Private mean estimation} 
The first part of our paper deals with finding a private empirical mean of a distribution, which is necessary because empirical means have been shown to reveal personally identifiable or otherwise sensitive information \cite{dinur2003revealing,dwork2017exposed,dwork2015robust}. Foundational work by \citet{karwa2017finite} studied algorithms for privately calculating  statistical properties of finite-sample Gaussian distributions in various settings. Follow-up work introduced practical methods for incorporating distributional assumptions for multivariate settings \cite{biswas2020coinpress,kamath2020private} or for long-tailed distributions \cite{kamath2022private}.
Complementary work discussed robustness guarantees to data ablations \cite{liu2021robust}, and eschewing distributional assumptions on Gaussians \cite{ashtiani2022private}. 
Of greatest relevance to our work is that of \citet{biswas2020coinpress}, who present an adaptive mean estimation algorithm, which we discuss in detail later. Additionally, contemporaneous work by \citet{lin2023differentially} studied DP stratification for confidence intervals, but they do not consider distributional assumptions on strata group sizes as we do.

\paragraph{Private synthetic data}
A substantial amount of work on DP data synthesis has been conducted in recent years \cite{aydore2021differentially,boedihardjo2022private,cai2021data,mckenna2019graphical,rosenblatt2020differentially,vietri2020new,zhang2021privsyn}, with the best-performing methods following the ``\textit{Select, Measure, Project}'' paradigm (discussed further in our results section) \cite{tao2021benchmarking,rosenblatt2022epistemic}. We present our synthetic data results for three state-of-the-art algorithms: MST \cite{mckenna2021winning}, GEM \cite{liu2021iterative} and AIM \cite{mckenna2022aim}, each of which offers a different flavor of this paradigm. %the ``Select, Measure, Project'' paradigm.

\section{Preliminaries and Problem Statement}
\label{sec:prelims}

\paragraph{Notation} 
We notate a standard Gaussian normal distribution 
as ${\mathcal {N}}(\mu ,\sigma ^{2})$, using $\mu$ for mean and $\sigma^2$ for variance. 
A common distribution in data privacy is the centered \textit{Laplace} distribution, notated $Lap(b)$, with PDF $f(x| b) = {\frac {1}{2b}}\exp \left(-{\frac {|x|}{b}}\right)$. 
We use $X$ to denote the data set in consideration, with $x_1,...,x_n \in X$ items corresponding to individuals. Each $x_i$ is a single value or a vector of multiple values. We let $G_1 \cup \ldots \cup G_k = X$ denote disjoint subsets of $X$.

\paragraph{Differential privacy basics} 
Differential privacy (DP) guarantees that the result of a data analysis or a query remains virtually unchanged even when one record in the dataset is modified or removed, thus preventing any deductions about the inclusion or exclusion of any specific individual. Modifying or removing a record from dataset $X$ induces a \textit{neighboring} dataset $X'$. We usually fix our definition of neighboring datasets, and define DP accordingly. For the purposes of our paper, $X$ and $X'$ are neighboring if one can be obtained by removing a single item $x_i$ from the other. 

The definitions for classical DP and zero-concentrated DP ($\rho$-zCDP) (a common alternative definition relevant to our work) are given in Definition~\ref{def:dp} and Definition~\ref{def:zcdp} respectively.
\begin{definition}[$(\epsilon,\delta)$-Differential Privacy]
\label{def:dp}
    A randomized mechanism ${\mathcal {M}}$ provides $(\epsilon,\delta)$-differential privacy if, for all pairs of neighboring datasets $X$ and $X'$, and all subsets $R$ of possible outputs:
$${\displaystyle \Pr[{\mathcal {M}}(X)\in R]\leq e^{\epsilon} \Pr[{\mathcal {M}}(X')\in R] + \delta}$$
\end{definition}
\begin{definition}[Concentrated Differential Privacy ($\rho$-zCDP)~\cite{bun2016concentrated}]
\label{def:zcdp}
    Here, $D_\alpha\left(\mathcal{M}(X)||\mathcal{M}(X')\right)$ denotes
    $\alpha$-R\'enyi divergence. Then, a randomized algorithm $\mathcal{M}$
    satisfies \emph{$\rho$-zCDP} if for
    all pairs of neighboring datasets $X$ and $X'$,
    $$D_\alpha\left(\mathcal{M}(X)||\mathcal{M}(X')\right) \leq \rho\alpha, \forall \alpha \in (1,\infty)$$
\end{definition}
These two closely related definitions 
scale with different relative privacy parameters. As \citet{bun2016concentrated} showed, an ordering over the guarantees is as follows: An $(\epsilon,0)$-DP mechanism gives $\frac{\epsilon^2}{2}$-zCDP, which gives $(\epsilon \sqrt{2 \log(1/\delta)}, \delta)$-DP for every $\delta > 0$.

\begin{definition}[Sensitivity] Let $f: X \rightarrow \mathbb{R}$ be a real-valued function. The \textit{sensitivity} $\Delta f$ of $f$ is defined as:
% \begin{align}
    $\Delta f=\max |f(X)-f(X')|$, where the maximum is taken over all pairs of possibly neighboring data sets $X,X'$.
% \end{align}
\end{definition}
\begin{definition}[Laplace Mechanism] Given a real-valued function $f$, the Laplace mechanism provides ``pure'' $(\epsilon,0)$-differential privacy:
% \begin{align}
    ${\mathcal {M}}_{Lap}(f, X,\epsilon)=f(X)+Lap \left(\frac{\Delta f}{\epsilon }\right).$
% \end{align}
\label{def:lap}
\end{definition}
To understand the impact of adding Laplace noise on the utility of a DP estimate, we will also require the following standard tail bound for our theoretical analysis: 
\begin{definition}[Laplace Tail Bound]
\label{def:laplace_tail}
Let $Y$ be a random variable draw from a centered Laplace distribution with parameter $b$. Then
\begin{align*}
\Pr[|Y| \geq \alpha b] \leq e^{-\alpha}.
\end{align*}
\end{definition}
\begin{definition}[Gaussian Mechanism ($\rho$-zCDP)] The Gaussian mechanism provides $\rho$-zCDP:
\begin{align}
    {\mathcal {M}}_{Gaussian}(f, X, \rho)=f(X)+ \mathcal{N}\left(0,\frac{\Delta f^2}{2\rho}\right).
\end{align}
\label{def:gauss}
\end{definition}
\begin{definition}[Composition rules \citep{bun2016concentrated,dwork2014algorithmic}] 
\label{def:comp}
$(\epsilon,\delta)$-DP composes gracefully. For ``sequential composition,'' if two randomized algorithms $\mathcal{M}$ and $\mathcal{M'}$ satisfy $(\epsilon_1,\delta_1)$-DP and $(\epsilon_2,\delta_2)$-DP, respectively, then the sequential $M^*(X) = (\mathcal{M}(X), \mathcal{M'}(X))$ satisfies $(\epsilon_1 + \epsilon_2,\delta_1 + \delta_2)$-DP. For ``parallel composition,'' if $\mathcal{M}$ satisfies $(\epsilon,\delta)$-DP, and $\{X_1,...,X_k\}$ are disjoint, then the parallel mechanism $\mathcal{M}^*(X) = (\mathcal{M}(X_1), ..., \mathcal{M}(X_k))$ also satisfies $(\epsilon,\delta)$-DP. Note that $\rho$-zCDP composes analogously.
\label{def:comp}
\end{definition}

\paragraph{Problem statement} 
We consider a setting where the dataset $X$ can be divided into groups of individuals. Specifically, we assume $k$ \emph{disjoint} subsets of $X$: $G_1, \ldots, G_k$, each of size $\{g_1, g_2,...,g_k\} = \mathbf{g}$, that partition $X$ s.t. $G_1 \cup G_2 \cup ... \cup G_k = X$. Groups are typically defined by \emph{sensitive attributes}. For example, suppose we have two sensitive attributes, race with $N_r$ possible values, and gender identity with $N_g$ values. Then we would create a group $G_i$ for each of the $N_r \cdot N_g$ possible combinations of race and gender identity. Our goal will be to release DP statistics about each group $G_i$ so as to prioritize the highest possible accuracy for each group while also achieving acceptable accuracy for the full population when those statistics are aggregated.

\paragraph{Access to public weights} In this paper, we assume limited access to public data: namely, available estimates of group sizes. This data is often already available. For example, in the case of intersectional groups (\eg between race, gender identity, and income), the proportion of these groups in many populations is known (\eg from Census data) across data contexts. Studying the implications of public data access for privacy is common, and assuming information about group sizes is quite mild in comparison to assumptions made in most prior work in this area
\cite{bie2022private,ji2013differential, liu2021leveraging}. Without accurate estimates of group sizes, we expect that the performance of stratification methods in approximating global statistics would degrade, although this topic is beyond the scope of our paper. 

\section{Stratified Private Mean Estimation} \label{sec:strat_mean}
We begin by studying the effect of stratification on the problem of DP mean estimation for (single-variate) Gaussian data. We consider the standard simplified setting where the data variance is fixed and known, so data can be scaled to have variance $1$. \Ie $X = \{x_1, x_2, ..., x_n\}$ consists of $n$ i.i.d. draws from $\mathcal{N}(\mu, 1)$ where $x_i \in \mathbb{R}$. Additionally, we assume a known upper bound $R$ on the absolute value of the mean, $\mu$. In this setting, the standard DP estimator combines the Laplace mechanism with a data clipping step \citep{dwork2009differential, karwa2017finite}. For a scalar value $x$ and a chosen constant $\gamma \in (0,1)$, we let:
\begin{align*}
    \clip(x) = \begin{cases}
    x ~~~~~ \text{if } x\in [R-\sqrt{\log n/\gamma}, R+\sqrt{\log n/\gamma}]\\
    \sign(x) \cdot (R+\sqrt{\log n/\gamma}) ~~~~~ \text{otherwise}.
    \end{cases}
\end{align*}
We then return the empirical mean with clipping and Laplace noise as a DP estimate $\hat{p}$. \Ie
\begin{align*}
\hat{p} = \frac{1}{n} \sum_{i=1}^{n} \clip(x_i) + Lap\left(\frac{2R + 2\sqrt{\log n/\gamma}}{n\epsilon}\right).
\end{align*}
It can be checked that $\hat{p}$ is $(\epsilon,0)$-differentially private. To bound estimate error, we can use a triangle inequality involving the (non-private) empirical mean ${p} = \frac{1}{n}\sum_{i=1}^n x_i$:
\begin{align*}
    |\hat{p} - \mu| \leq |{p} - \mu| + |\hat{p} - {p}|.
\end{align*}
By standard Gaussian concentration, we have that the first term, which represents inherent statistical error, is bounded by $O(\frac{1}{\sqrt{n}})$ with high probability. We'd like to bound the second term, which represents additional error incurred by privatization. Applying Definition~\ref{def:laplace_tail}, this second term is bounded with probability $1-O(\gamma)$ by:
\small
\begin{align}
\label{eq:fresh_est_bound}
    |\hat{p} - {p}| \leq O\left(\log(1/\gamma) \cdot \left(\frac{R + \sqrt{\log n/\gamma}}{n\epsilon}\right)\right).
\end{align}
\normalsize

In the setting where $X$ can be split into disjoint groups, $G_1, \ldots, G_k$, we assume that each group $i$ contains normally distributed data with mean $\mu_i$ and standard deviation $1$. \Ie $X$ follows a mixture of unit-variance Gaussian distributions.\footnote{DP methods for estimating mixtures of Gaussians were studied extensively by \citep{kamath2019differentially}, differing in that the identities of the sub-populations are \textit{unknown}.} Since the global mean can easily wash out information about individual groups, the natural stratification approach to minimizing disparate impact would be to compute DP estimates $\hat{p}_1, \ldots, \hat{p}_k$ for each $\mu_1,\mu_2,...\mu_k$. \Ie
\small
\begin{align}
\label{eq:hatpi}
\hat{p}_i = \frac{1}{|G_i|} \sum_{x\in G_i} \clip(x) + Lap\left(\frac{2R + 2\sqrt{\log |G_i|/\gamma}}{|G_i|\epsilon}\right).
\end{align}
\normalsize

Since each of these means is based on a disjoint subset of $X$, by parallel composition (Definition \ref{def:comp}), we can compute group-specific means with privacy parameters $(\epsilon,0)$, and still obtain an overall $(\epsilon,0)$-private method. 

Given individual group estimates, how do we then compute an estimate of the global mean, $\mu$? One natural approach is to do so from scratch, using a different DP estimator. Doing so, however, has a few drawbacks: (1) Since $X$ is not disjoint from $G_1, \ldots, G_k$, we would only obtain a $(2\epsilon,0)$-private method via serial composition if we report all individual means as well as the group mean. That is, for the same level of privacy, our error will be greater by a factor of two. (2) If we separately estimate the global mean, then this estimate may be ``inconsistent,'' in that it could differ from a weighted average of the per-group means. An alternative is to simply return the private estimate:
\small
\begin{align}
\label{eq:p_strat}
\hat{p}_{strat} = \frac{1}{n} \sum_{i=1}^k |G_i| \cdot \hat{p}_i. 
\end{align}
\normalsize
\begin{proposition}[Privacy of $\hat{p}_{strat}$] \label{prop:privacy}The stratified mean with the Laplace mechanism is $(\epsilon, 0)$-DP by parallel composition rules (see Definition~\ref{def:comp}).
\end{proposition}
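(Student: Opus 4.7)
The plan is to prove the proposition in two short steps: first apply parallel composition to obtain a DP guarantee for the tuple of per-group estimates, and then observe that $\hat{p}_{strat}$ is a post-processing of that tuple using public weights.

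For the first step, I would verify that each per-group estimator $\hat{p}_i$ from Equation~\eqref{eq:hatpi}, viewed as a mechanism acting on $G_i$ alone, is itself $(\epsilon,0)$-DP. The clipping step bounds each contribution to have magnitude at most $R+\sqrt{\log|G_i|/\gamma}$, so the clipped empirical mean over $G_i$ has sensitivity at most $(2R + 2\sqrt{\log|G_i|/\gamma})/|G_i|$ --- exactly the scale of the Laplace noise added in~\eqref{eq:hatpi}. By Definition~\ref{def:lap}, this makes $\hat{p}_i$ an $(\epsilon,0)$-DP release that depends only on records in $G_i$. Because $G_1,\ldots,G_k$ are disjoint subsets of $X$ by construction, parallel composition (Definition~\ref{def:comp}) immediately yields that the joint release $(\hat{p}_1,\ldots,\hat{p}_k)$ is also $(\epsilon,0)$-DP.

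For the second step, $\hat{p}_{strat} = \tfrac{1}{n}\sum_{i=1}^{k} |G_i|\,\hat{p}_i$ is a deterministic function of the tuple $(\hat{p}_1,\ldots,\hat{p}_k)$ together with the weights $|G_i|/n$, which are treated as public under the ``Access to public weights'' assumption stated earlier in the paper. Since DP is closed under post-processing with public side information, $\hat{p}_{strat}$ inherits the $(\epsilon,0)$-DP guarantee of the joint release.

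The only real subtlety is the treatment of the group sizes: $|G_i|$ and $n$ themselves depend on the data, since a neighboring $X'$ shifts $n$ by one and exactly one of the $|G_i|$ by one. The cleanest path, and the one the public-weights assumption is designed to enable, is to regard each $|G_i|$ as externally supplied so the final aggregation becomes pure post-processing. Avoiding that assumption would instead require separately bookkeeping the deterministic one-record shift in the weight of the affected group, which is elementary but would obscure the clean invocation of parallel composition that the proposition is meant to highlight.
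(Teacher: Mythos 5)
Your proposal is correct and follows essentially the same route as the paper, which proves this proposition simply by invoking parallel composition over the disjoint groups $G_1,\ldots,G_k$ (each $\hat{p}_i$ being an $(\epsilon,0)$-DP Laplace-mechanism release) together with the weighted aggregation in Eq.~\eqref{eq:p_strat}. Your additional observations --- that the aggregation is post-processing and that the weights $|G_i|/n$ must be treated as public for this to be clean --- make explicit a subtlety the paper handles only implicitly via its ``Access to public weights'' assumption, and are a welcome clarification rather than a deviation.
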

But how accurate will the estimate be in comparison to a ``fresh'' DP estimate of the global mean? Again, letting ${p}$ equal the empirical mean $\frac{1}{n}\sum_{i=1}^n x_i$, we find the following (complete proof deferred to the Appendix):
\begin{proposition}[Worst Case Bound for Stratified Mean Estimation]
Let $\hat{p}_{strat}$ be the estimator defined in Eq.~\eqref{eq:p_strat}. With probability $1-O(\gamma)$, 
\label{prop:naive}
\small
\begin{align}
\label{eq:pstrat_error}
    |{p}-\hat{p}_{strat}| \leq O\left(\log(1/\gamma) \cdot \sqrt{k}  \left(\frac{R + \sqrt{\log(n/k\gamma)}}{n\epsilon}\right)\right)
\end{align}
\normalsize
\end{proposition}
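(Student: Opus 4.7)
The plan is to decompose $\hat{p}_{strat} - p$ into a clipping-bias term and an independent-Laplace-noise term, show the former vanishes on a high-probability event via a Gaussian tail plus union bound, and then concentrate the noise term using Bernstein's inequality for sub-exponentials to extract a $\sqrt{k}$ dependence, rather than the $k$ that a naive Laplace tail bound would yield.

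Concretely, set $\tilde{p}_i = \frac{1}{|G_i|}\sum_{x\in G_i}\clip(x)$ and let $Y_i \sim Lap\bigl((2R + 2\sqrt{\log |G_i|/\gamma})/(|G_i|\epsilon)\bigr)$ be the noise drawn for group $i$, so that $\hat{p}_i = \tilde{p}_i + Y_i$. With $p_i = \frac{1}{|G_i|}\sum_{x\in G_i}x$ and $p = \sum_i \frac{|G_i|}{n}p_i$,
\[
\hat{p}_{strat} - p \;=\; \sum_{i=1}^k \frac{|G_i|}{n}\bigl(\tilde{p}_i - p_i\bigr) \;+\; \sum_{i=1}^k \frac{|G_i|}{n}\,Y_i.
\]
Gaussian tail bounds applied to each $x_i$, together with a union bound over all $n$ samples (rescaling $\gamma$), imply that with probability $1 - O(\gamma)$ every $x \in G_i$ lies in the no-clip interval, so $\clip(x) = x$, $\tilde{p}_i = p_i$, and the first sum is identically zero. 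The total error collapses to $S := \sum_{i=1}^k \frac{|G_i|}{n}\,Y_i$.

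The hard step is concentrating $S$: a Laplace-by-Laplace application of the tail bound of Definition~\ref{def:laplace_tail} followed by union bounding over the $k$ groups is loose by a factor of $\sqrt{k}$, so one must exploit independence of the $Y_i$ directly. Each summand $(|G_i|/n)Y_i$ is mean-zero and sub-exponential with parameter $O(M)$, where $M = O\bigl((R + \sqrt{\log(n/\gamma)})/(n\epsilon)\bigr)$, and $\mathrm{Var}(S) = O(kM^2)$. Bernstein's inequality for sums of independent sub-exponentials then gives
\[
\Pr\bigl[|S| > t\bigr] \;\leq\; 2\exp\!\Bigl(-c\,\min\!\bigl(t^2/(kM^2),\; t/M\bigr)\Bigr),
\]
and choosing $t = \Theta\bigl(M\sqrt{k}\log(1/\gamma)\bigr)$ makes the right-hand side $O(\gamma)$ in both the sub-Gaussian and sub-exponential regimes. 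Substituting $M$ and replacing the worst-case $\log(n/\gamma)$ by the per-group $\log(|G_i|/\gamma) = O(\log(n/(k\gamma)))$ recovers the bound in Eq.~\eqref{eq:pstrat_error}. The remaining obstacle beyond this Bernstein step is purely algebraic; a direct MGF calculation for independent Laplaces is an equally viable alternative if one prefers to avoid invoking a black-box concentration inequality.
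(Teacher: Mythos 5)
Your proposal matches the paper's proof essentially step for step: both condition on the high-probability no-clipping event via Gaussian tails and a union bound, reduce the error $p-\hat{p}_{strat}$ to a weighted sum of independent centered Laplace variables, and then apply sub-exponential (Bernstein-type) concentration to that sum to obtain the $\sqrt{k}$ rather than the $k$ dependence a per-group union bound would give. The one imprecision is your closing substitution $\log(|G_i|/\gamma)=O(\log(n/(k\gamma)))$, which fails pointwise when one group dominates (there $\log(|G_i|/\gamma)\approx\log(n/\gamma)$); the paper instead bounds the aggregate $\sum_{i=1}^k\log|G_i|\le k\log(n/k)$ via an AM--GM argument (its Lemma~\ref{lemma:1}), which is exactly the quantity appearing in your variance computation, so the gap is cosmetic rather than substantive.
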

When $k$ is small (\eg $k = n^c$ for constant $c < 1$), then $O(\log(n/k\gamma) = O(\log(n/\gamma))$. So, the above bound appears \emph{worse} than what we obtain from the standard DP estimate $\hat{p}$ in Eq.~\eqref{eq:fresh_est_bound} by a multiplicative factor of $\sqrt{k}$. Nevertheless, when we compute $\hat{p}_{strat}$ in practice, \textbf{we find its accuracy is competitive with $\hat{p}$.} It is natural to ask why this is the case. 

\paragraph{Distributional assumptions} 
To better understand this question, first note that the error term involving $R$ is typically a lower-order term, since an accurate estimate for $R$ can be found using adaptive DP mean estimation techniques \cite{biswas2020coinpress}. Removing all shared multiplicative terms and assuming $\gamma$ is a small constant, we then see that the difference between the error of a fresh DP estimate, as in Eq.~\eqref{eq:fresh_est_bound}, and $\hat{p}_{strat}$, as in Eq.~\eqref{eq:pstrat_error} is a matter of $\log(n)$ vs. $k\log(n/k)$. In the proof of Proposition \ref{prop:naive}, the $k\log(n/k)$ term arises from the following sum involving the group sizes:
% \begin{align*}
$\sum_{i=1}^k \log(|G_i|).$
% \end{align*}

\begin{figure}
\centering
\fbox{\includegraphics[width=.3\textwidth]{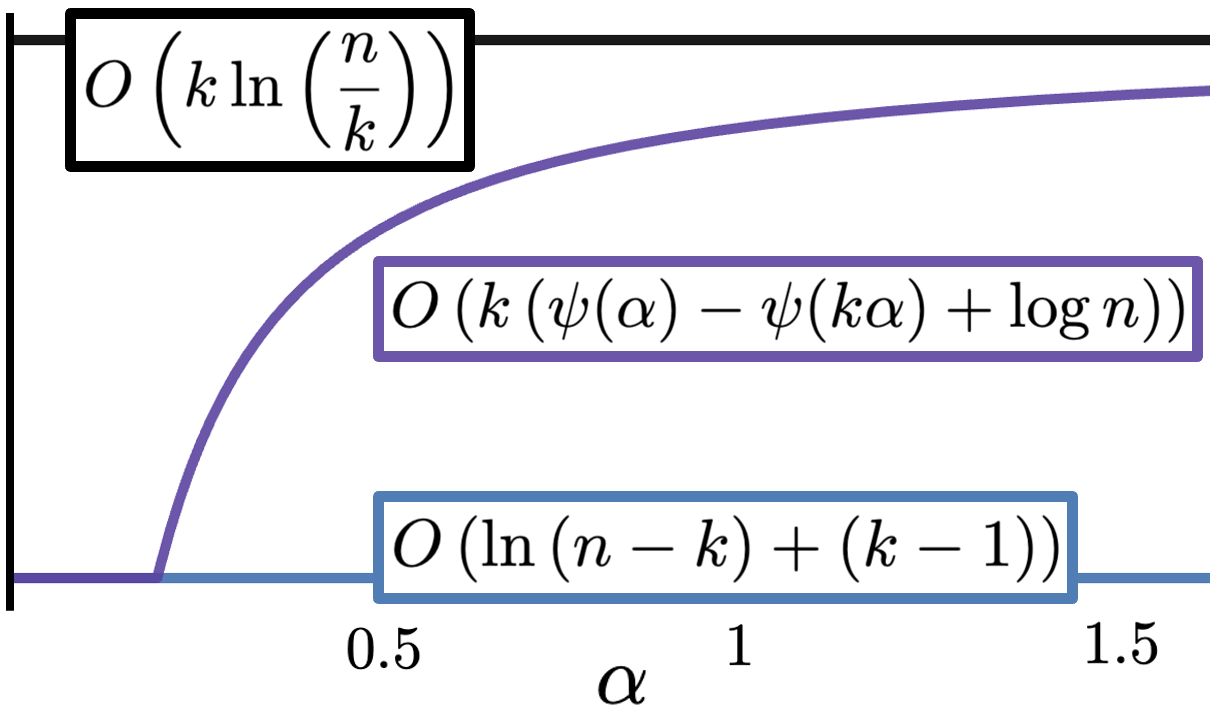}}
\caption{Visualizing the affect of the $\alpha$ parameter on this expectation term, with fixed $k$ and $n$. $\alpha$ essentially controls the ``sparsity'' of the distribution; a small $\alpha$ implies expected error closer to the theoretical lower bound. }
\label{fig:alpha_effect}
\end{figure}

This sum is maximized when all sizes are equal, so $|G_i| = n/k$. However, when the sum is smaller, we obtain a tighter bound than Proposition \ref{prop:naive}. In the most extreme case, when all group sizes equal one, except for a single (majority) group, the bound can be as good as $(k-1) + \log(n-k)$, which nearly matches the $\log n$ dependence of $\hat{p}$. And, in fact, we \textit{rarely} observe uniform group sizes in practice, particularly when considering intersectional groups. Often, a small number of majority groups dominate. To better explain the error, take a standard Bayesian assumption that our group size vector is drawn from a \textit{Dirichlet} distribution, $\mathcal{D}(\alpha, k)$. (Our Appendix contains notes and details on Dirichlet parameters and behavior for completeness.) In this case, we prove the following bound (see Appendix):
\begin{theorem}[Error Upper Bound with Dirichlet Assumption] Consider the Dirichlet distribution $\mathcal{D}(\alpha, k)$ with parameters $k,\alpha$ and let $\mathbf{g} \sim \mathcal{D}(\alpha, k)$ be a vector drawn from this distribution. If for each group $i$, $|G_i| = g_i/n$, then,  letting $\psi$ denote the digamma function, we have: 
\label{thm:dirichlet_expression}
\begin{align}
\E\left[\sum_{i=1}^k \log |G_i|\right] \leq k\left(\psi(\alpha)-\psi(k\alpha) + \log n\right) %+2R
\end{align}

\end{theorem}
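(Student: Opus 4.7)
My plan is to decompose the sum, exploit the symmetry of the Dirichlet, and invoke the standard identity for the expected log of a Dirichlet marginal. I will treat $g_i$ as the (continuous) proportion drawn from $\mathcal{D}(\alpha,k)$ with $\sum_i g_i = 1$, so that the expected group size is $n g_i$. If $|G_i|$ is viewed as an integer-valued realization (e.g.\ a multinomial count conditional on $\mathbf{g}$), then Jensen's inequality applied to the concave function $\log$ gives
\begin{align*}
\E\bigl[\log |G_i| \,\big|\, g_i\bigr] \leq \log \E\bigl[|G_i| \,\big|\, g_i\bigr] = \log(n g_i) = \log n + \log g_i.
\end{align*}
Summing over $i$ and taking expectations by linearity reduces the problem to controlling $\sum_{i=1}^k \E[\log g_i]$, and accounts for the fact that the statement is an inequality rather than an equality.

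Next, I would use the marginal structure of the symmetric Dirichlet: if $\mathbf{g} \sim \mathcal{D}(\alpha, k)$, then each coordinate satisfies $g_i \sim \mathrm{Beta}(\alpha, (k-1)\alpha)$ by the standard aggregation property of the Dirichlet. The key calculation is the classical moment identity
\begin{align*}
\E[\log g_i] = \psi(\alpha) - \psi(k\alpha),
\end{align*}
which follows by differentiating the log-normalizer of the Beta family with respect to its first shape parameter: since the Beta density is $\frac{\Gamma(k\alpha)}{\Gamma(\alpha)\Gamma((k-1)\alpha)} g^{\alpha-1}(1-g)^{(k-1)\alpha-1}$, taking $\partial_\alpha$ of the log-density and using the fact that a density integrates to one isolates $\E[\log g_i]$, with the derivatives of the $\log \Gamma$ terms producing the digamma functions.

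Finally, I would sum the per-coordinate identity over all $k$ groups and combine with the Jensen step:
\begin{align*}
\E\left[\sum_{i=1}^k \log |G_i|\right] \leq k \log n + k\bigl(\psi(\alpha) - \psi(k\alpha)\bigr),
\end{align*}
which is precisely the stated bound. The main obstacle is really just the Jensen step at the interface between the continuous Dirichlet proportions and the integer group sizes; after that, the argument is a routine invocation of Dirichlet/Beta moment identities, so no delicate concentration argument is required.
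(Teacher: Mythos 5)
Your proposal is correct and follows essentially the same route as the paper: both reduce the problem to the identity $\E[\log g_i] = \psi(\alpha) - \psi(k\alpha)$ obtained from the $\mathrm{Beta}(\alpha,(k-1)\alpha)$ marginal of the symmetric Dirichlet, then multiply by $k$ and add $k\log n$. Your additional Jensen step to bridge integer-valued group counts and the continuous proportions is a small refinement the paper omits (it simply takes $|G_i| = g_i \cdot n$ exactly, yielding equality), but it does not change the substance of the argument.
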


To better understand the bound of Theorem \ref{thm:dirichlet_expression}, please refer to Figure~\ref{fig:alpha_effect}, which plots the bound in comparison to the weaker upper bound of $k\log(n/k)$ from Propostion \ref{prop:naive}, and in comparison to  our informal lower bound $O(k + \log(n))$. As we can see, as the parameter $\alpha$ of the Dirichlet-distributed group size vectors varies from $[0.2, 1.0)$, 
we interpolate between the lower and upper bound. Small $\alpha < 1.0$ implies ``sparsity'', i.e. one or two dominant entries in the vector $\mathbf{g}$. 
Overall, we frame this result as follows: In cases when minority groups in the data are relatively small compared to the majority groups, additional noise from privacy is expected to be small when aggregated.

\paragraph{Practical Implications of Theory} Our theory for stratified private mean estimation (PME) bounds worst case costs (Proposition~\ref{prop:naive}) and expected costs (Theorem~\ref{thm:dirichlet_expression}) for stratification. Much can be learned from these bounds practically. For a salient example, consider a sample of American Community Survey data for New York from 2018, where $n=196967$ and the \texttt{RACE} variable has $k=9$. The group size vector here is $v_g= [0.70, 0.12, 0.086, 0.056, 0.029,...]$. Using an iterative Monte Carlo approach to find a likely Dirichlet prior over $v_g$, we find $\alpha\approx 0.13$. Applying Proposition~\ref{prop:naive} and Theorem~\ref{thm:dirichlet_expression} then tells us that, for any setting of $R$ or $\epsilon$, our \textit{expected} absolute error from stratification in estimating the mean for a given variable is approximately $\sim 80\%$ better than the worst case error, relative to the best case stratification error. This at least partially explains the strong performance of stratification empirically, seen in Figure 2. Put another way, our theory shows that, when the group sizes follow a Dirichlet distribution, then, as $\alpha \rightarrow 0$, the error of stratified PME scales with $O(k+ \log(n) + R)$.  This is close to the error of non-stratified PME, which scales as $O(\log(n) + R)$. 

\section{Adaptive Private Mean Estimation} % and experiments}

\label{sec:adaptive_algos}
In the previous section, we focused on bounding the additional error introduced by differential privacy, \ie on $|\hat{p} - p|$, where $\hat{p}$ is a DP mean estimate and $p$ is the empirical mean. However, as we highlighted, this error is always in addition to the inherent statistic error $|\mu - p|$ between the true population mean $\mu$ and our empirical estimate $p$. It is well known that stratification can help reduce this statistic error \cite{cochran1977sampling, BotevRidder:2017}. This helps explain the value of stratification in practice: extra error introduced by DP noise can be offset by reduced statistical error. 

In particular, consider the case when each group has fixed variance $1$, but the means $\mu_i$ for the groups can differ substantially i.e. the overall data variance $\sigma$ is much larger than $1$. Then we expect statistical error on the order of $O(\sigma/\sqrt{n})$. On the other hand, if we assume we know exact group sizes and stratify, statistical error should scale as $O(1/\sqrt{n})$, which can be substantially better \cite{BotevRidder:2017}. 

The naive mean estimation method analyzed in the previous section will likely not benefit from this scenario, since we will need to choose a large range $R$ (thus scaling our noise) if our group means differ substantially. However, adaptive mean estimation methods like \textsc{Coinpress} \cite{biswas2020coinpress} have been introduced that only depend logarithmically on $R$. In our experiments, we observe that the aggregate population level means of these methods can actually \textit{benefit} from stratification. 

\begin{figure}[t!]
    \centering
    \includegraphics[width=0.49\textwidth]{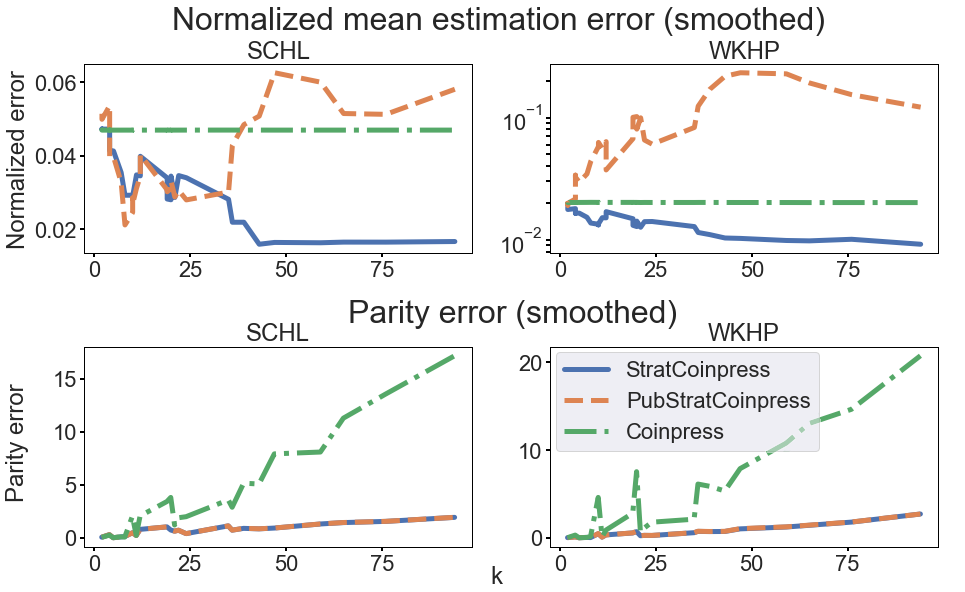}
    \caption{Comparisons of \textsc{Coinpress} and \textsc{StratCoinpress} on U.S. Census data from Folktables; $\texttt{SCHL}$ is years spent in school (ordinal, 0-24), $\texttt{WKHP}$ is hours worked per week (ordinal, 0-168). Top row shows normalized mean estimation error, bottom row shows parity error, both as the number of groups $k$ increases.}
    \label{fig:folktables_coinpress_results}
\end{figure}

\paragraph{Measuring disparate impact of DP} 
We consider the following formalism for our setting to measure the impact of private mechanisms on subgroups in data. Intuitively, we want the error for any protected group in a population to be comparable to the error for other groups. Consider dataset $X$, which consists of $k$ strata $G_1 \cup ... \cup G_k$. Consider a non-private function that maps $X$ to a real-valued vector with values for each of the $k$ strata as well as a single global value $f(X): X \rightarrow \mathbb{R}^{k+1}$. A privatized version of $f(x)$ is denoted $\mathcal{M}_f(X): X \rightarrow \mathbb{R}^{k+1}$. Results for both on stratum $i$ are denoted $f(G_i)$ and $\mathcal{M}_f(G_i)$ respectively. Population level results are then $f(X)_{k+1}$ and $\mathcal{M}_f(X)_{k+1}$.
\begin{definition}[Parity error]
    Parity error $\phi$ is the average normalized absolute error in approximating $f(G_i)$ for each stratum $i$, plus the normalized absolute error in approximating $f(X)_{k+1}$, weighted by a positive parameter $\omega$ that determines the emphasis on population-level accuracy:
    \small
    \begin{align}
        \omega \left| \frac{f(X)_{k+1} - \mathcal{M}_f(X)_{k+1}}{f(X)_{k+1}}\right| + \sum_{i=1}^k \left|\frac{f(G_i) - \mathcal{M}_f(G_i)}{f(G_i)}\right|
    \end{align}
    \normalsize
    \label{def:di}
\end{definition}
In Definition~\ref{def:di}, $\omega$ weights the faithfulness to the overall population estimate in the sum. We emphasize faithfulness for the protected class strata, and so we set $\omega = \frac{1}{k}$ (weighting population-level estimates the same as any single-stratum estimate).

\paragraph{Data and Setup} We provide results of two experiments to validate the stratified adaptive private mean estimation approach.
The first experiment (Figures~\ref{fig:error_vs_n},~\ref{fig:error_vs_k},~\ref{fig:exhaustive}) uses a synthetic mixture of $k$ Gaussians, over a dataset of size $n$, with Dirichlet parameter $\alpha$, to compare \textsc{Coinpress} and \textsc{StratCoinpress}. (The Appendix includes parameters for our synthetic Gaussians, which are chosen illustratively and are independent of our theory). Our second experiment (Figure~\ref{fig:folktables_coinpress_results}) is on demographic census data for New York State from \textit{Folktables}, a standard dataset in fair-ML literature \citep{ding2021retiring}. For consistency and comparability with prior work, we maintain Census variable codes in all of our plots and results; for example, \texttt{SCHL} is a discrete variable denoting years of education, and ranges from 0-24. For a complete list of Census variables, counts, marginals, and their corresponding meanings and domains, see the Appendix. We also use common, legally protected classes, like \texttt{SEX}, \texttt{AGE} and \texttt{RAC1P}, to create groups for stratification. Note that we present two stratified variants: \textsc{StratCoinpress} assumes direct access to the demographic weights necessary for stratification, while \textsc{PubStratCoinpress} calculates those weights on a smaller, non-overlapping public holdout set.

\begin{figure}[ht]
\centering
\includegraphics[width=0.47\textwidth]{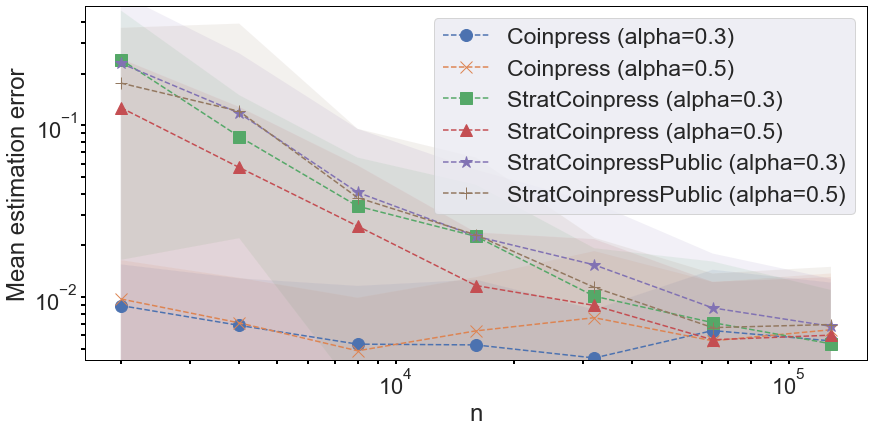}
\caption{Mean estimation error, varying data size $n$ in synth Gaussian mixture (50 runs). Stratified variants converge to performance of non-stratified \textsc{Coinpress} as $n$ grows (for $n \geq 10000$, error $\leq 1\%$). In other words, with large samples, we don't have to ``pay'' in \textit{global} estimation error for a reduction in \textit{group-specific} error.}
\label{fig:error_vs_n}
\end{figure}

In Figure~\ref{fig:error_vs_n}, we depict the convergence of \textsc{StratCoinpress} to the performance of  \textsc{Coinpress} as $n$ grows linearly. 
We do not report on disparate impact experiments with the synthetic data, as we can arbitrarily \textit{increase} harm (measured by parity error) by increasing variance of the mixed Gaussian means. Figures~\ref{fig:exhaustive} and~\ref{fig:error_vs_k} on this experiment, which show the effects of varying $k$ and $\alpha$, are available in the Appendix. 

\begin{figure*}[ht!]
\centering
\includegraphics[width=\textwidth]{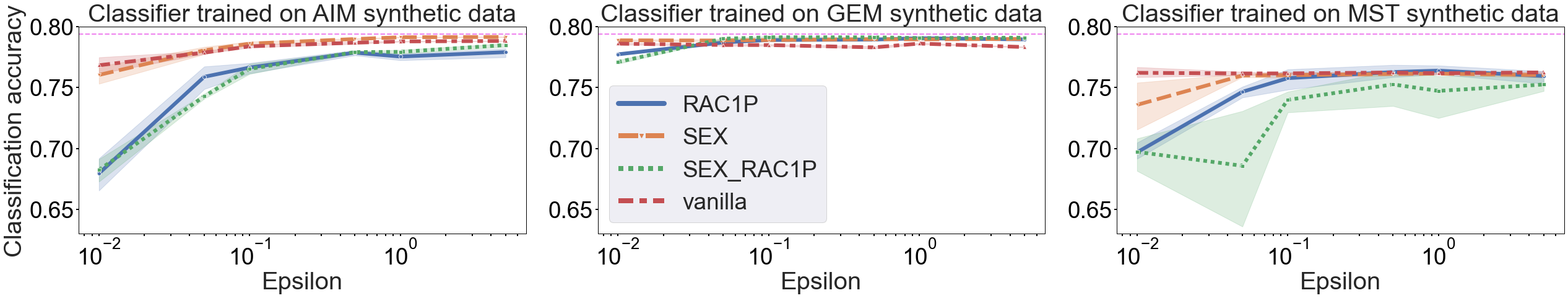}
\caption{Performance of different stratified synthesizers, Folktables employment prediction. Note that variants of GEM and AIM approach a classifier trained on real data (accuracy $\approx 0.8$) as privacy budget increases. From the perspective of maintaining predictive utility, stratification appears to incur only minor costs, if any (for $\epsilon \geq 1$, difference $\leq 2\%$).}
\label{fig:performance_acs}
\end{figure*}

\paragraph{Effectiveness}
In Figure~\ref{fig:folktables_coinpress_results}, we show the effectiveness of a stratified adaptive mean estimation in practice. We vary $k$ by creating a different number of intersectional groups by combining $sex \in [0,1]$, $race \in [0,8]$, $age \in [0,4]$ (bucketed), and $physical$ $disability \in [0,1]$. The top row of Figure~\ref{fig:folktables_coinpress_results} shows that the error of the stratified variants of $\textsc{Coinpress}$ is controlled in settings where the range of possible values is small relative to the total population size. In fact, when the subgroups are particularly meaningful (such as with $\texttt{SCHL}$, $\texttt{WKHP}$ and $\texttt{JWMNP}$), stratification can sometimes \textit{improve} on the overall mean estimates. Challenging settings for stratification are large continuous spaces such as with $\texttt{PINCP}$, where the strata-specific estimates (and, thus, the accuracy of aggregation) suffer from well-known private mean estimation challenges for wide range long-tailed distributions.

The bottom row of Figure~\ref{fig:folktables_coinpress_results} tells the story of harm reduction on subgroups in the data. Because private estimates are done over disjoint strata under composition (Def.~\ref{def:comp}), parity error can be tightly controlled by $\textsc{StratCoinpress}$.
On the other hand, non-stratified $\textsc{Coinpress}$ incurs substantial utility loss for most intersectional groups, because their mean differs significantly from the population-level mean.

\begin{figure}[!ht]
\centering
\includegraphics[width=0.45\textwidth]{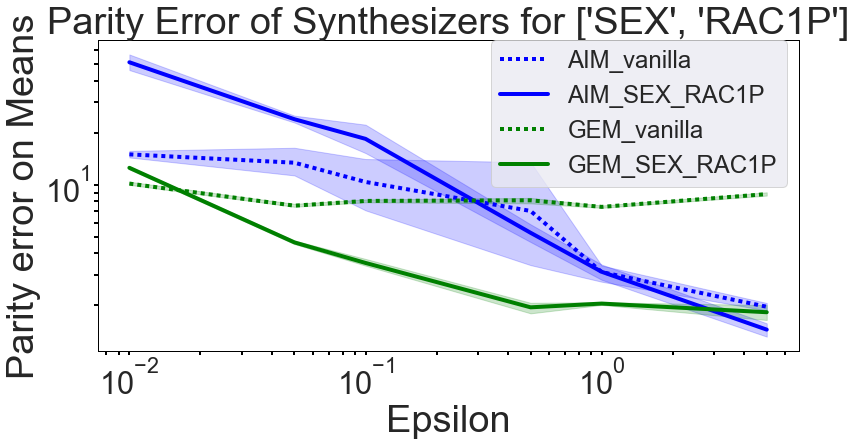}
\caption{Parity error, where $f$ aggregates the error of means calculated on all variables in the Folktables employment prediction dataset. Here we contrast the two higher performing synthesizers: AIM (adaptive, lower predictive performance) and GEM (non-adaptive, higher predictive performance). Vanilla GEM struggles without explicit stratification, but by stratifying we can reduce parity error significantly (a reduction of $\geq 200\%$). Vanilla AIM's budget and workload adaptivity has parity error benefits at low privacy budgets, though explicit stratification improves at higher budgets.}
\label{fig:parity_error}
\end{figure}

\section{Stratified Data Synthesis}

\label{sec:synthetic}
The disparate impacts of DP have been most commonly framed as occurring through private synthetic data \citep{bagdasaryan2019differential,ganev2022robin}. Luckily, the principles of stratification (subset, estimate and aggregate) are not limited to mean estimation. Stratifying private synthesizers follows the same straightforward process: We learn separate parametric private distributions for each group-specific stratum. Then, to compose population-level data, we sample elements from strata-specific models \textit{proportionally to their representation in the full population}. We do not offer theoretical bounds for private synthesizers as we did for mean estimation, as their guarantees are limited to specific query workloads selected during model fitting. This makes theoretical analysis challenging, and we leave it to future work. 

\paragraph{Synthesizers and Privacy Settings} 
We conducted an extensive empirical evaluation on the viability of stratified synthetic data. We selected three state-of-the-art DP synthesizers: MST \citep{mckenna2021winning}, GEM \citep{liu2021iterative}, and AIM \citep{mckenna2022aim}. 
MST is data-aware, GEM is both data- and workload-aware, and AIM is data-, workload-, and privacy budget-aware \cite{mckenna2022aim}. Benchmarking DP synthesizers is non-trivial and computationally expensive \citep{rosenblatt2022epistemic}. We ran our experiments on an NVIDIA T4 GPU cluster and on a high-performance CPU cluster. We ran our models on the same privacy settings as \cite{mckenna2022aim}, $\epsilon \in \{0.01,0.05,0.1,0.5,1.0,5.0\}$, representing a low to medium privacy budget regime. We trained 5 differently seeded models for each synthesizer at each privacy setting. Our models took over 200 hours of compute time to fit.
We employ two naming conventions in labeling our figures: the first is that $\_$\texttt{vanilla} denotes the non-stratified version of the algorithm, and the second is that $\_$\texttt{VARIABLE} means that the algorithm stratifies explicitly along a set of dimensions (for example, \texttt{SEX}$\_$\texttt{RAC1P} implies separate models maintained for all subgroups induced by sex and race.)

\paragraph{Classification Setting} First, 
we found that stratification only seemed to harm the overall utility of private synthetic data in low privacy regimes ($\epsilon < 0.1$). We demonstrate this on the Folktables employment prediction task \cite{ding2021retiring} by training a classifier on DP synthetic versions of the data. Figure~\ref{fig:performance_acs} shows that the strength of these classifiers increased as $\epsilon$ increased, implying that relationships in the data are maintained for nearly all variants, although we acknowledge the limitations of using classification as a proxy task higher-dimensional fidelity.

\paragraph{Parity Error Reduction} Second, we find that stratified variants of all synthesizers reduce parity error (Definition~\ref{def:di}). Figure~\ref{fig:parity_error} shows how additional privacy budget can generally improve the performance of the stratified variants (where the parity error function $f$ is the aggregate normalized difference of means across all variables in the Folktables employment task data). The best-performing synthesizer in our tests (GEM) also stratifies most gracefully, and provides the best parity error in nearly  all $\epsilon$ parameter settings. We defer some experimental results to the Appendix, which further demonstrate that, for all synthesizers, as privacy budget increases: (1) the maximum disparity of all \textit{means} gradually decreases (Figure~\ref{fig:max_disp}); (2) the \textit{demographic parity} \cite{hardt2016equality} on the Folktables employment task increases towards 1 (Figure~\ref{fig:dem_par_ratio}); and (3) the maxmimum false negative rate  difference between groups decreases (Figure~\ref{fig:fnr_difference}).

Finally, in Figure~\ref{fig:marginals} (also deferred to the Appendix), we use the \textsc{All 3-Way Marginal} workload error from \cite{mckenna2022aim} as our function for parity error. This is a standard method of measuring performance of DP synthesizers: a $k$-way marginal for attribute set $S$ (where $|S|=k$) is a histogram over $x_S \in X$; marginal workload error is the average difference between these marginals run on real data and on private synthetic data (see Appendix or \cite{mckenna2022aim} for more details). Figure~\ref{fig:marginals} demonstrates that, in the case of non-budget adaptive algorithms like GEM, one can stratify and achieve low parity error and great overall performance. In the case of the budget-adaptive algorithm AIM, we see that it eventually has ample budget to adapt to all groups, and that the stratification might hurt performance. However, \textit{we rarely know what an acceptable budget setting is a priori}; insufficient privacy budget for AIM (here, $\epsilon < 1.0$) greatly increases the parity error of the \texttt{vanilla} adaptive algorithm relative to its stratified variant. Performance of the the stratified variant improves stably for all subgroups, and is thus much safer to use in regimes with a limited privacy budget, or when it is unclear how to set the budget.

Overall, we found that the stratified GEM variants convincingly outperformed MST and AIM with small privacy budget ($\epsilon\leq 1.0$). % for improving on protected class performance for very accurate private mechanisms. 
AIM did outperform GEM in the highest privacy regime of $\epsilon=5.0$, likely due to AIM's ability to adapt and utilize ``excess'' privacy budget. 

\paragraph{Limitations and future work} Our experimental results suggest that stratified synthetic data is safe given sufficient privacy budget, and that it often helps improve utility of subgroup data while preserving population-level utility, even in the case of adaptive algorithms. However, this fundamentally relies on good aggregation proportions, which may not always be available (say, in a medical context for a \textit{specific} hospital). Though we believe that stratification provides a strong baseline for future work on adaptive DP algorithms with disparate impact protections, a stronger theoretical underpinning for stratified private synthesizers would allow for greater confidence in deployment. 
Additionally, our approach essentially targets parity error, and perhaps unsurprisingly trades off good performance there with worse performance by other metrics. Future work could formally characterize this trade-off by identifying the Pareto frontier of this dual optimization problem.

\section{Conclusion}

Reducing disparate impact in DP data release is a laudable goal that has received much attention in recent years. We showed that, when access to public estimates of group proportions in the data can be assumed, a \textit{stratified} approach to disparate impact reduction is surprisingly effective, and that it does not significantly reduce --- and sometimes even improves --- the accuracy of population-level private statistics when the stratified private data is aggregated. With this work, we hope to encourage interest in principled methodologies for harm reduction when privatizing social data.  We also hope that that practitioners will find the simple strategy we outlined here immediately applicable for private data release on sensitive data with protected classes. 

\section{Acknowledgments}
\label{sec:ack}

This research was supported in part by NSF Award Nos. 1916505, 2312930, 1922658, by NSF Award No. 2045590 and by the NSF Graduate Research Fellowship under Award No. DGE-2234660.

\bibliography{main}
%% Add the current appendix in doc

\newpage
\quad
\newpage
\appendix

\section{Complete proofs}
\label{sec:appendix:proofs}

Below we provide complete proofs and results to support the main body of the paper. 

\subsection{Proof of Proposition~\ref{prop:naive}}
The standard bound for a private mean estimator with the Laplace mechanism is given in the main paper body. We restate it here for convenience. With probability $1-O(\gamma)$, where $p$ is the empirical mean and $\hat{p}$ is the private empirical mean, then:
\begin{align}
     |\hat{p} - {p}| \leq O\left(\log(1/\gamma) \cdot \left(\frac{R + \sqrt{\log n/\gamma}}{n\epsilon}\right)\right) \label{eq:standard}
\end{align}

Recall that our stratified synthesizer is defined as:
\begin{align}
\label{eq:p_strat}
\hat{p}_{strat} = \frac{1}{n} \sum_{i=1}^k |G_i| \cdot \hat{p}_i \end{align}
Where each $\hat{p}_i$ is a standard private empirical mean obtained through the Laplace mechanism and clipping (see Section~\ref{sec:strat_mean} for details), with the bound given in Equation~\ref{eq:standard}.

In Proposition~\ref{prop:naive}, we bound the error between the stratified private empirical mean estimator $\hat{p}_{strat}$ and the non-private empirical mean $p$. We restate Proposition~\ref{prop:naive} below for convenience.

\paragraph{Proposition~\ref{prop:naive}}[Worst Case Error Bound for Stratified Mean Estimation]
\textit{Let $\hat{p}_{strat}$ be the estimator defined in Eq.~\eqref{eq:p_strat}. With probability $1-O(\gamma k)$, }
\begin{align}
\label{eq:pstrat_error}
    |{p}-\hat{p}_{strat}| \leq O\left(\log(1/\gamma) \cdot \sqrt{k} \cdot \left(\frac{R + \sqrt{\log(n/k\gamma)}}{n\epsilon}\right)\right)
\end{align}

\begin{proof}
Consider a single group $|G_i|$. When computing the standard DP estimator $\hat{p}_i$ we perform a clip operation for any $x\in G_i$ only if $|x| \geq R + \sqrt{\log|G_i|/\gamma}$. Recall that $x$ is assumed to be Gaussian distributed with mean $\mu_i$ and variance $1$. As is standard in the analysis of the unstratified estimator, we claim that, with high probability, no values of $x$ get clipped. In particular, we apply the standard concentration bound for univariate Gaussians:
\begin{align*}
\Pr[|x-\E[x]| \geq t] \leq e^{-t^2/2}.
\end{align*}
Since $\mu_i$ is assumed to lie in $[-R,R]$, this bound implies that $|x| \geq R + \sqrt{\log|G_i|/\gamma}$ with probability less than $O(\gamma/|G_i|)$. So, by a union bound, no $x\in G_i$ gets clipped with probability at least $1- O(\gamma)$. Further union bounding across all $k$ groups, conclude that with probability at least $1- O(k\gamma)$, no $x$ gets clipped across the entire dataset. This fact makes it easy to write down an expression for ${p}-\hat{p}_{strat}$. In particular, we have:
\begin{align*}
{p}-\hat{p}_{strat} = \frac{1}{n} (L_1 + L_2 + \ldots + L_k),
\end{align*}
where $L_i$ is a centered Laplace random variable with parameter $b_i = \frac{2R + 2\sqrt{\log |G_i|/\gamma}}{\epsilon}$. 
Following the standard definition of sub-exponential random variables found e.g in \cite{wainwright_2019}, we can check that $L_i$ is sub-exponential with parameters $\nu_i = \alpha_i = 2b_i$. 
Accordingly, by a standard argument, $\sum_{i=1}^k L_i$ is sub-exponential with parameters $\nu = 2\sqrt{\sum_{i=1}^k b_i^2}$ and $\alpha = \max_i 2b_i \leq 2\sqrt{\sum_{i=1}^k b_i^2}$ \cite{duchi2021lecture}. We conclude via a standard sub-exponential tail bound (see e.g. Proposition 2.9 in \cite{wainwright_2019}) that for a fixed constant $c$,
\begin{align}
\label{eq:second_to_last}
    \Pr\left[|{p}-\hat{p}_{strat}| \geq c\log(1/\gamma)\cdot \sqrt{\sum_{i=1}^k b_i^2}\right] \leq \gamma.
\end{align}
So, we are just left to bound $\sqrt{\sum_{i=1}^k b_i^2}$. To do so, note that:
\begin{align}
\label{eq:expanded}
\sum_{i=1}^k b_i^2 &= \frac{4}{\epsilon^2} \sum_{i=1}^k (R + \sqrt{\log |G_i|/\gamma})^2 \nonumber\\
&\leq \frac{8}{\epsilon^2} \sum_{i=1}^k R^2 + \log |G_i|/\gamma \nonumber\\
&= \frac{8}{\epsilon^2}\left(kR^2 + k\log(1/\gamma) +  \sum_{i=1}^k\log |G_i|\right)
\end{align}

In Lemma~\ref{lemma:1}, we show that the $\max_{|G_1|,\ldots, |G_k|} \sum_{i=1}^k \log|G_i| \leq k\log(n/k)$ through an arithmetic mean argument, allowing us to upper bound Equation~\eqref{eq:expanded} by:
\begin{align}
    O\left(\frac{k}\epsilon^2 \cdot \left(R^2 + \log(n/k\gamma)\right) \right)
\end{align}
Plugging into \eqref{eq:second_to_last} and using that for any positive $C,D$, $\sqrt{C^2 + D^2} \leq C + D$ yields the proposition.
\end{proof}

Note: for clarity in Lemma~\ref{lemma:1}, we use $|G_i| = g_i$.
\begin{lemma}
\label{lemma:1}
For any $g_1,...,g_k \in \mathbb{R}$,  $g_i > 0$ such that $\sum_{i=1}^k g_i = n$,  we have $\sum_{i=1}^k {\log g_i} \leq k{\log(n/k)}$.
\end{lemma}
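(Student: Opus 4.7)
The plan is to prove Lemma~\ref{lemma:1} by a one-line application of concavity of the logarithm, via Jensen's inequality (equivalently, the AM--GM inequality). The constraint $\sum_{i=1}^k g_i = n$ means the average of the $g_i$'s is exactly $n/k$, which is precisely the maximizer we want to evaluate the log at.

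First I would observe that $\log$ is a concave function on $(0,\infty)$. Applying Jensen's inequality to the uniform distribution over the indices $1,\ldots,k$ gives
\begin{align*}
\frac{1}{k}\sum_{i=1}^k \log g_i \;\leq\; \log\!\left(\frac{1}{k}\sum_{i=1}^k g_i\right) \;=\; \log(n/k),
\end{align*}
and multiplying through by $k$ yields the claimed inequality $\sum_{i=1}^k \log g_i \leq k\log(n/k)$. The bound is tight, attained exactly when $g_1 = g_2 = \cdots = g_k = n/k$, which matches the intuition given in the main text that the $\sum_i \log|G_i|$ term is maximized when the group sizes are balanced.

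There is essentially no obstacle here: the only subtlety is verifying the hypotheses of Jensen (positivity of the $g_i$, which is given, and concavity of $\log$, which is standard). An equivalent path would be to invoke AM--GM directly: $\bigl(\prod_{i=1}^k g_i\bigr)^{1/k} \leq \frac{1}{k}\sum_{i=1}^k g_i = n/k$, then take logarithms of both sides. Either route finishes the proof in a couple of lines, so I would present the Jensen version for brevity and move on.
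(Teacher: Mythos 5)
Your proof is correct, and it takes a genuinely different (and arguably cleaner) route than the paper's. The paper proves the lemma from scratch with a local exchange (``smoothing'') argument: it picks any pair $g_i \neq g_j$, writes them as $\hat{g}_{ij} \pm d$, observes that replacing both with their arithmetic mean preserves the sum while strictly increasing the product $\prod_i g_i$, and concludes that the product --- hence $\sum_i \log g_i$ --- is maximized at the uniform configuration $g_1 = \cdots = g_k = n/k$. You instead invoke concavity of $\log$ and Jensen's inequality (equivalently AM--GM) as a black box, which finishes in one line. Your version buys brevity and also sidesteps a small gap in the paper's argument: showing that every non-uniform point can be strictly improved does not by itself establish that the uniform point is the global maximizer --- one still needs to know a maximizer exists (e.g., by compactness of the simplex $\{g_i > 0, \sum_i g_i = n\}$ after handling the boundary, or by iterating the smoothing step and passing to a limit). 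The paper's version, on the other hand, is self-contained and exhibits explicitly \emph{why} balanced group sizes are the worst case, which is the intuition the main text leans on when discussing the Dirichlet sparsity regime. Both are acceptable; yours is the one I would present if citing Jensen is permissible.
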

\begin{proof}

Note that:
\begin{align}
    \sum_{i=1}^k \log g_i = \log g_1 + ... + \log g_k = \log \left(\prod_{i=1}^k g_i\right)  
\end{align}
Consider any pair $g_i$ and $g_j$, where $g_i \neq g_j$, in $\prod_{i=1}^k g_i$. Let their arithmetic average be $\hat{g}_{ij}$ such that  $g_i=\hat{g}_{ij} + d$ and $g_j=\hat{g}_{ij} - d$, where $d\neq 0$. Then
\begin{align}
     g_i + g_j = \hat{g}_{ij} + \hat{g}_{ij} \\
     g_i \cdot g_j = \hat{g}_{ij}^2 - d^2 < \hat{g}_{ij} \cdot \hat{g}_{ij}
\end{align}
Thus, we can maintain the overall sum $\sum_{i=1}^k g_i = n$ while increasing the product $\prod_{i=1}^k g_i$ by setting $g_i,g_j = \hat{g}_{ij}$ = $n/k$. Accordingly, the product can only be maximized when $g_i = g_j$ for all $i,j$ -- in other words, when $g_1, \ldots, g_k = n/k$. It follows that: 
\begin{align*}
    \log\left(\prod_{i=1}^k g_i\right) &\leq \log\left(\prod_{i=1}^k n/k\right) \\
    \sum_{i=1}^k \log g_i &\leq \sum_{i=1}^k \log n/k.\qedhere
    \end{align*}
\end{proof}

\subsection{Proof of Theorem~\ref{thm:dirichlet_expression}}

We restate Theorem~\ref{thm:dirichlet_expression} here for convenience. 

\textit{Consider the Dirichlet distribution $\mathcal{D}(\alpha, k)$ with parameters $k,\alpha$ and let $\mathbf{g} \sim \mathcal{D}(\alpha, k)$ be a vector drawn from this distribution. If for each group $i$, $|G_i| = g_i \cdot n$, then,  letting $\psi$ denote the digamma function, we have:} 
\begin{align}
\E\left[\sum_{i=1}^k \log |G_i|\right] \leq k\left(\psi(\alpha)-\psi(k\alpha) + \log n\right) %+2R
\end{align}

\begin{proof}
We begin by applying linearity of expectation. 
\begin{align}
\label{eq:plug_back}
\mathbb{E}\left[\sum^k_{i=1} \log |G_i|\right] &= k \mathbb{E}[\log g_i +\log n] \noindent\\
&= k\log n + k \mathbb{E}[\log g_i]
\end{align}

Thus, the main challenge is deriving an expression for $\mathbb{E}[\log g_i]$ where $\mathbf{g} \sim \mathcal{D}(\alpha, k)$. 
We begin by applying law of the unconscious statistician.
\begin{align}
    \mathbb{E}[\log g_i] = \int_0^1 \log g_i \mathcal{D}(g_i | \alpha) dg_i
\end{align}
We can marginalize $\mathcal{D}(g | \alpha)$ out into Beta distribution $\mathcal{B}(\alpha_i, - \alpha_i + \sum^k \alpha_i )$. As we assume symmetric $\alpha$, this becomes $\mathcal{B}(\alpha, k\alpha - \alpha)$. With substitution now, we have:
\begin{align}
    \mathbb{E}[\log g_i ] &= \int_0^1 \log g_i \mathcal{B}(\alpha, k\alpha - \alpha) dg_i \\
    &= \int_0^1 \log g_i \frac{1}{\beta(\alpha, k\alpha - \alpha)} g_i^{\alpha - 1}(1- g_i)^{k\alpha - \alpha - 1} dg_i 
\end{align}
Thus, we will take the square root of our final term.  Note further that $\beta(z_1, z_2) = \frac{\Gamma(z_1)\Gamma(z_2)}{\Gamma(z_1+z_2)} = \int_0^1 t^{z_1-1}(1-t)^{z_2-1}dt$ is the $Beta$ function and $\Gamma(n) = (n-1)!$ is the $Gamma$ function. As both the $Beta$ function term and $\alpha$ are a constants, we can simplify further with:
\begin{align}
    &=\frac{1}{\beta(\alpha, k\alpha - \alpha)} \frac{d}{d\alpha} \int_0^1  g_i^{\alpha - 1}(1- g_i)^{k\alpha - \alpha - 1} dg_i
\end{align}
Note that the integration term is now exactly a $Beta$ function, so we can simplify further.
\begin{align}
    &=\frac{1}{\beta(\alpha, k\alpha - \alpha)} \frac{d \beta(\alpha, k\alpha - \alpha)}{d\alpha}
\end{align}
Which is now equivalent to taking the derivative of $\log(\beta(\alpha, k\alpha - \alpha))$ with respect to $\alpha$.
\begin{align}
    &=\frac{d}{\alpha} \log(\beta(\alpha, k\alpha - \alpha))
\end{align}
Performing the $Gamma$ function expansion we get:
\begin{align}
    &=\frac{d}{\alpha} \log\left(\frac{\Gamma(\alpha)\Gamma(k\alpha - \alpha)}{\Gamma(\alpha+k\alpha - \alpha)}\right) \\
    &= \frac{d}{\alpha}\log(\Gamma(\alpha)) - \frac{d}{\alpha}\log(\Gamma(k\alpha))
\end{align}
The function $\psi(x) = \frac{d}{dx} \log(\Gamma(x))$ is commonly known as the \textit{digamma} function.

We've now shown that:
\begin{align}
    \mathbb{E}[\log g_i] = \psi(\alpha)-\psi(k\alpha)
\end{align}
Plugging into \eqref{eq:plug_back} yields the result.
\end{proof}

\section{Additional details and experiments for \textsc{Coinpress}}
\label{sec:supp_coinpress}
In Algorithms~\ref{alg:1d-mean-step} and \ref{alg:1d-mean}, we restate \textsc{Coinpress} algorithms from \cite{biswas2020coinpress} for completeness. In Algorithm~\ref{alg:1d-mean-strat}, we give a formalized version of the simple stratification algorithm discussed in the main paper body.
\begin{algorithm}
    \caption{One Step Private Improvement of Mean Interval}
    \label{alg:1d-mean-step}
    \hspace*{\algorithmicindent} \textbf{Input:} $n$ samples $X_{1\dots n}$ from $N(\mu, \sigma^2)$, $[\ell, r]$ containing $\mu$,  $\sigma^2$, $\rho_s, \beta_s > 0$ \\
    \hspace*{\algorithmicindent} \textbf{Output:} A $\rho_s$-zCDP interval $[\ell', r']$
    \begin{algorithmic}[1] % The number tells where the line numbering should start
        \Procedure{\uvmeanstep}{$X_{1\dots n}, \ell, r, \sigma^2, \rho_s, \beta_s$} 
        \State Project each $X_i$ into the interval $[\ell - \sigma\sqrt{2\log(2n/\beta_s)}, r + \sigma\sqrt{2\log(2n/\beta_s)}]$. \label{ln:mean-step-trunc}
        \State Let $\Delta = \frac{r - \ell + 2\sigma\sqrt{2\log(2n/\beta_s)}}{n}$. \label{ln:mean-step-sens}
        \State Compute $Z = \frac{1}{n}\sum_i X_i + Y$, where $Y \sim N\left(0,\left(\frac{\Delta}{\sqrt{2\rho_s}}\right)^2\right)$. \label{ln:mean-step-gm}
        \State \textbf{return} the interval $Z \pm \sqrt{2\left(\frac{\sigma^2}{n}+\left(\frac{\Delta}{\sqrt{2\rho_s}}\right)^2\right)\log(2/\beta_s)}$. \label{ln:mean-step-return}
        \EndProcedure
    \end{algorithmic}
\end{algorithm}

\begin{algorithm}
    \caption{Private Confidence-interval-based Univariate Mean Estimation}
    \label{alg:1d-mean}
    \hspace*{\algorithmicindent} \textbf{Input:} $n$ samples $X_{1 \dots n}$ from $N(\mu, \sigma^2)$, $[\ell, r]$ containing $\mu$, $\sigma^2$, $t \in \mathbb{N}^+$, $\rho_{1 \dots t}, \beta > 0$ \\
    \hspace*{\algorithmicindent} \textbf{Output:} A $(\sum_{i=1}^t \rho_i)$-zCDP estimate of $\mu$
    \begin{algorithmic}[1] % The number tells where the line numbering should start
        \Procedure{\uvmeaniter}{$X_{1 \dots n}, \ell, r, \sigma^2, t, \rho_{1 \dots t}, \beta$} 
        \State Let $\ell_0 = \ell, r_0 = r$.
         \For {$i\in [t-1]$} \label{ln:mean-est-loop}
         \State $[\ell_i, r_i] = \uvmeanstep(X_{1 \dots n}, \ell_{i-1}, r_{i-1}, \sigma^2, \rho_i, \beta/4(t-1))$.
         \EndFor
         \State $[\ell_t, r_t] = \uvmeanstep(X_{1 \dots n}, \ell_{t-1}, r_{t-1}, \sigma^2, \rho_t, \beta/4)$. \label{ln:mean-est-final}
         \State \textbf{return} the midpoint of $[\ell_i, r_i]$.
        \EndProcedure
    \end{algorithmic}
\end{algorithm}

\begin{algorithm}
    \caption{Stratified Univariate \textsc{Coinpress}}
    \label{alg:1d-mean-strat}
    \hspace*{\algorithmicindent} \textbf{Input:} Strata $X^1 \cup ... \cup X^k = X$ where each stratum $X^i \sim N(\mu_i, \sigma_i^2)$ and $X \sim N(\mu, \sigma^2)$ overall, $|X| = n$, $[\ell, r]$ containing all $\mu_i, \sigma_i$ and $\mu$, $\sigma^2$, $t \in \mathbb{N}^+$, $\rho_{1 \dots t}, \beta = \sum_{i=1}^k \beta_i > 0$, weights $\omega_1,...\omega_k$ \\
    \hspace*{\algorithmicindent} \textbf{Output:} A $(\sum_{i=1}^t \rho_i)$-zCDP estimate of global $\hat{\mu}$ as well as $(\sum_{i=1}^t \rho_i)$-zCDP estimates of $\hat{\mu_1},...,\hat{\mu_k}$ for each stratum.
    \begin{algorithmic}[1] % The number tells where the line numbering should start
        \Procedure{\stratcoin}{$X^1 \cup ... \cup X^k = X, \ell, r, \sigma^2, t, \rho_{1 \dots t}, \beta_1,...,\beta_k$, $\omega_1,...\omega_k$} 
         \For {$i\in k$} \label{stratloop}
         \State $\mu_i = \uvmeaniter(X_{1 \dots n}^i, \ell, r, \sigma^2, \rho_i, \beta_i)$.
         \EndFor
         \State $\hat{\mu} = \frac{1}{k} \sum^k_{i=1} \omega_i \mu_i$. \label{ln:mean-est-final}
         \State \textbf{return} $\hat{\mu}, \{\hat{\mu_1},...,\hat{\mu_k}\}$.
        \EndProcedure
    \end{algorithmic}
\end{algorithm}

\paragraph{Additional plots for coinpress experiments} We provide an exhaustive plot for convergence of stratified \textsc{Coinpress} variants in Figure~\ref{fig:exhaustive}.

Here we also state the generative model for the Gaussian data. We draw a Dirichlet vector $\mathbf{v} \sim \mathcal{D}(k,\alpha)$. Then, we fix uniformly random parameters for each $k$ Gaussian (in our tests, $\sigma_i \sim \mathcal{U}(\ell=0.1,u=2.0)$ and $\mu_i \sim \mathcal{N}(0,1.0)$.  Finally, using the weights of $\mathbf{v}$, we draw $k$ Gaussian samples $G_i \sim N(\mu_i,\sigma_i)$, where $|G_i| = \frac{n}{k}$. This is the data model used to produce our synthetic mean estimation results.

\begin{figure*}[ht]
\vspace{-0.3cm}
    \centering
    \includegraphics[width=\textwidth]{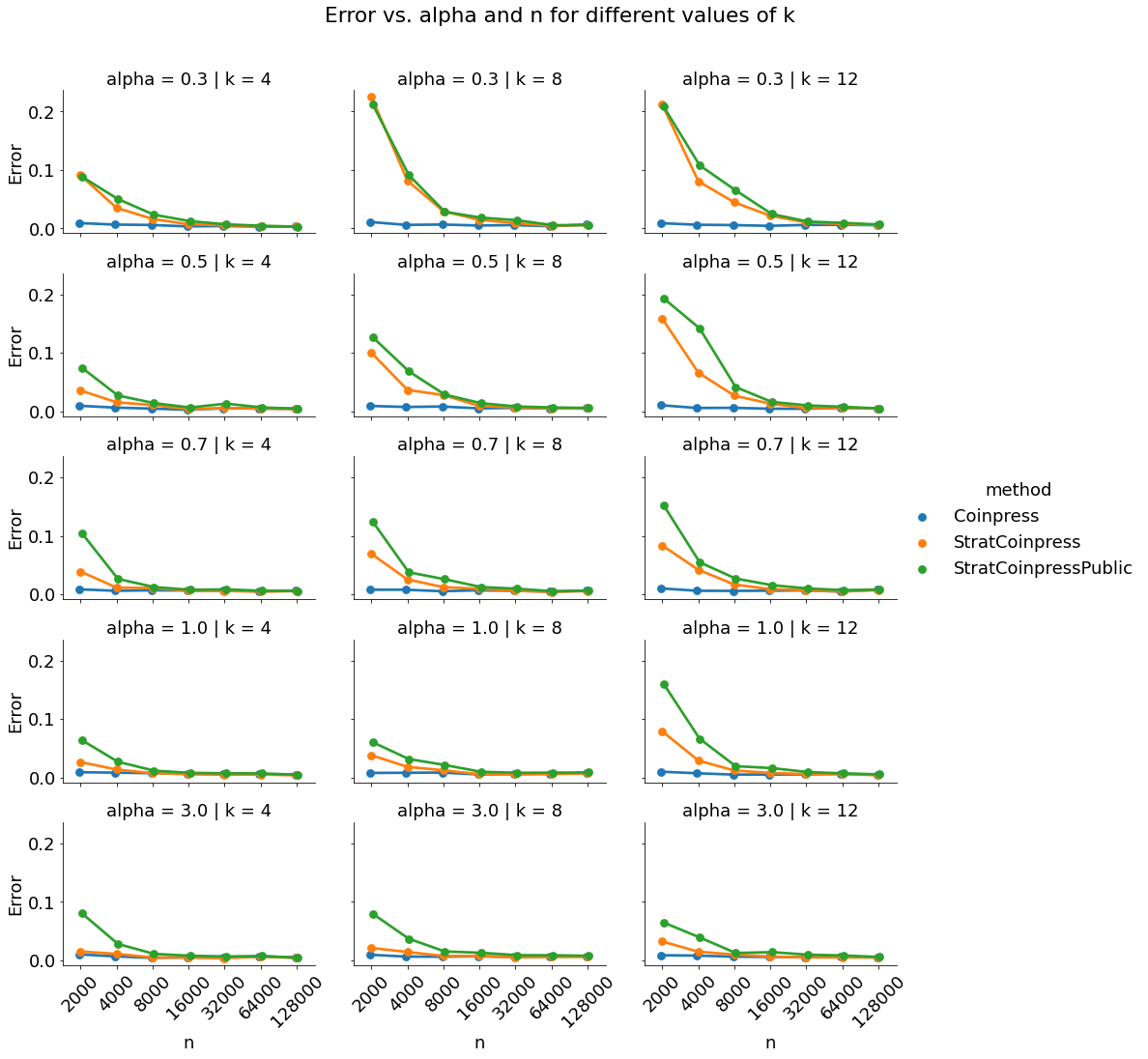}
    \caption{Complete results for synthetic \textsc{Coinpress} comparisons.}
    \label{fig:exhaustive}
\end{figure*}

\paragraph{Note on $(\rho)$-zCDP composition}
\textsc{Coinpress} relies on composition rules for zero-concentrated differential privacy.

Sequential composition: If $\mathcal{M}$ and $\mathcal{M'}$ satisfy $(\rho_1)$-zCDP and $(\rho_2)$-zCDP respectively, then the sequential $\mathcal{M}^*(X) = (\mathcal{M}(X), \mathcal{M'}(X))$ satisfies $(\rho_1 + \rho_2)$-zCDP.

Parallel composition: if $\mathcal{M}$ satisfies $\rho$-zCDP, and $\{X_1,...,X_k\}$ are disjoint, then the parallel mechanism $\mathcal{M}^*(X) = (\mathcal{M}(X_1), ..., \mathcal{M}(X_k))$ also satisfies $\rho$-zCDP. 

\section{Additional Plots and Notes on Private Data Synthesizers}
\label{sec:supp_synth_results}

\begin{figure*}[ht]
    \centering
    \centering
    \subfloat[Max disparity.]{
        \includegraphics[width=0.32\textwidth]{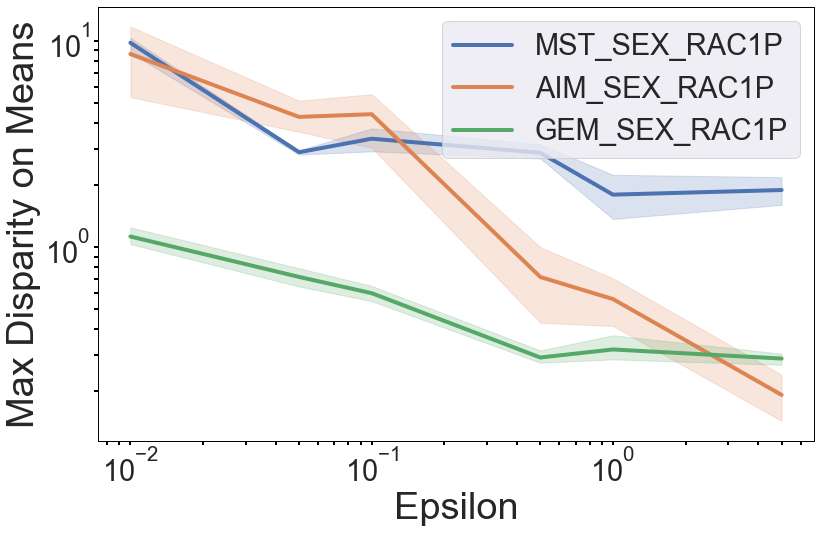}
        \label{fig:max_disp}
        
    }
    \hfill
    \subfloat[Demographic parity.]{
        \includegraphics[width=0.32\textwidth]{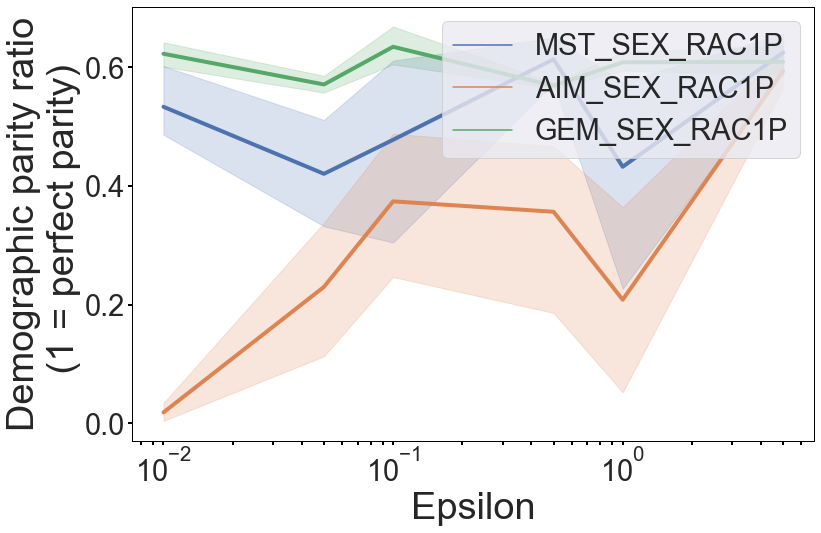}
        \label{fig:dem_par_ratio}
    }
    \hfill
    \subfloat[Max FNR Difference.]{
        \includegraphics[width=0.32\textwidth]{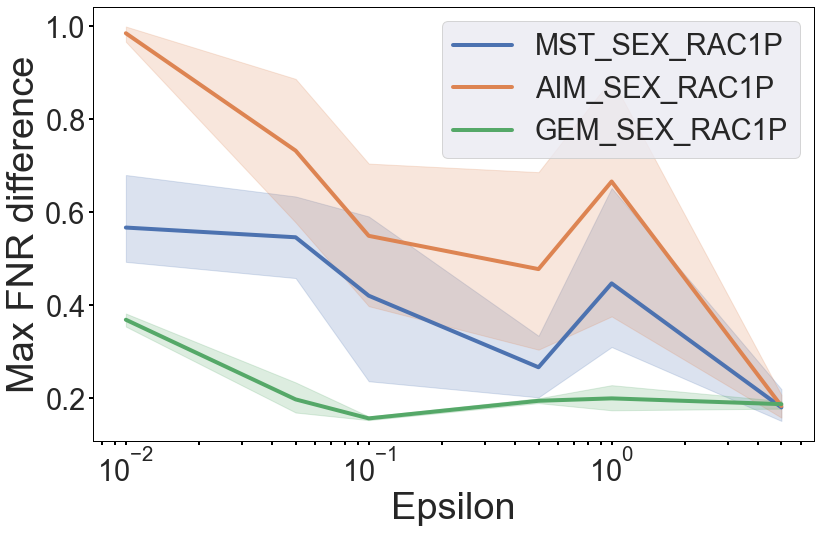}
        \label{fig:fnr_difference}
    }
    \label{fig:synthetic_results}
    
\end{figure*}

\paragraph{Discussion of challenges in analyzing stratified synthetic data} Consider a fundamental private data synthesizer like Multiplicative Weights Exponential Mechanism (MWEM) \citep{hardt2012simple}. MWEM introduced the underlying strategy of many state-of-the-art private synthesizers used today: (1) \textit{Select:} privately select a relatively small set of the most informative measurements over your data (in MWEM, this is done with the exponential mechanism \cite{dwork2014algorithmic}). (2) \textit{Measure:} privately measure that set on your real data (MWEW uses the Laplace mechanism in this step). (3) \textit{Project:} as a post-processing step, project theses private measurements onto a parameterized distribution over the domain of your data (MWEM maintains a histogram distribution and uses multiplicative weights for updates).

The challenge in analyzing stratified and non-stratified private data synthesizers lies in the initial \textit{Select} step. Were we to fix the set of selected measurements across all strata, then the aggregation step composes gracefully. In fact, in a non-private setting, the distributional weights for each variable for the entire population would simply be a weighted combination of the distributional weights for each strata. However, \textbf{we are not guaranteed that there will be overlap in measurements selected by each strata specific select step the select step run on the overall population}. For example, consider the stylized example where we limit ourselves to a single measurement in the select step. The most informative measurement candidate for the full population could be the \textit{second} most informative candidate for each stratum. In this case, we would have no direct measurements for the full population, and thus no population level utility guarantees.

\paragraph*{\textbf{Marginals}}

The following is adapted from \cite{mckenna2022aim}. They define marginals as a method to capture low-dimensional structure common in high-dimensional data distributions.  
Informally, a marginal for a set of attributes $S$ is essentially a histogram over $X_S$: it is a table that counts the number of occurrences of each $t \in \Omega_S$. Sometimes we can refer to the below functions as \emph{marginal queries}, and thus a \emph{marginal} is the resulting vector of counts $ M_S(D)$. 

\begin{definition}[Marginal \cite{mckenna2021winning,mckenna2022aim}]
Let $S \subseteq [d]$ be a subset of attributes, $\Omega_S = \prod_{i \in S} \Omega_S$, $n_S = | \Omega_S |$, and $X_S = (X_i)_{i \in S}$.  The marginal on $S$ is a vector $\mu \in \mathbb{R}^{n_S}$, indexed by domain elements $t \in \Omega_S$, such that each entry is a count, i.e., $\mu[t] = \sum_{x \in D} \mathbf{1}[x_S = t]$.  We let $M_S : \mathcal{D} \rightarrow \mathbb{R}^{n_S}$ denote the function that computes the marginal on $S$, i.e., $ \mu = M_S(D) $.  
\end{definition}

\paragraph*{\textbf{Workload Error}}
Adapted from \cite{mckenna2022aim}, the following defines a standard method for evaluating private synthetic data used to generate Figure~\ref{fig:marginals}. We define a workload as a collection of marginal queries that synthetic data should preserve. The marginal based utility measure underlying \textsc{3-Way Marginal Error} is stated in Definition~\ref{def:error}, framed as general error for any workload of queries. 

\begin{definition}[Workload Error \cite{mckenna2021winning,mckenna2022aim}] \label{def:error}
A workload $W$ consists of a list of marginal queries $S_1, \dots, S_k$ where $S_i \subseteq [d] $, together with associated weights $c_i \geq 0 $.  The error of a synthetic dataset $\hat{D}$ is defined as:
$$ \text{Error}(D, \hat{D}) = \frac{1}{k \cdot |D|}\sum_{i=1}^k c_i \left(||M_{S_i}(D) - M_{S_i}(\hat{D})||_1\right) $$
\end{definition}

\section{Additional Data Description}

\begin{table*}[]
    \centering
    \begin{tabular}{llllllllll}
        \toprule
        {} &          \textbf{SEX} &          \textbf{DIS} &     \textbf{NATIVITY} &        \textbf{RAC1P} &          \textbf{AGEP} &          \textbf{SCHL} &          \textbf{WKHP} &          \textbf{JWMNP} &                  \textbf{PINCP} \\
         &              &              &              &              &               &               &               &                &                        \\
        \midrule
        \textbf{Count}  &     196967 &     196967 &     196967 &     196967 &      196967 &      191372 &      103756 &        87355 &               166318 \\
        \textbf{Mean}   &        1.517 &        1.865 &        1.196 &        2.186 &        41.831 &         16.39 &        37.562 &         33.464 &              46657.585 \\
        \textbf{Std}    &          0.5 &        0.341 &        0.397 &        2.342 &        23.554 &         5.686 &        13.105 &         26.601 &                76027.3 \\
        \textbf{Range}  &  [1 , 2] &  [1 , 2] &  [1 , 2] &  [1 , 9] &  [0 , 95] &  [1 , 24] &  [1 , 99] &  [1 , 138] &  [-8300 , 1423000] \\
        \textbf{Median} &          2 &          2 &          1 &          1 &          42 &          18 &          40 &           30 &                26500 \\
        \bottomrule
        \end{tabular}
    \caption{This table gives descriptive information on columns from Folktables used in the mean estimation task.}
    \label{tab:mean_data}
\end{table*}

\begin{table*}[]
    \centering
    \begin{tabular}{lllllllllll}
        \toprule
        {} &         \textbf{AGEP} &         \textbf{SCHL} &          \textbf{MAR} &          \textbf{DIS} &          \textbf{ESP} &          \textbf{MIL} &         \textbf{DREM} &          \textbf{SEX} &        \textbf{RAC1P} &          \textbf{ESR} \\
         &              &              &              &              &              &              &              &              &              &              \\
        \midrule
        \textbf{Mean}   &        1.687 &        2.834 &        3.043 &        1.865 &        0.543 &        3.183 &        1.852 &        1.517 &        2.186 &        0.464 \\
        \textbf{Std}    &        1.234 &        1.246 &        1.867 &        0.341 &        1.587 &        1.547 &        0.471 &          0.5 &        2.342 &        0.499 \\
        \textbf{Range}  &  [0 , 4] &  [0 , 4] &  [1 , 5] &  [1 , 2] &  [0 , 8] &  [0 , 4] &  [0 , 2] &  [1 , 2] &  [1 , 9] &  [0 , 1] \\
        \textbf{Median} &          2 &          3 &          3 &          2 &          0 &          4 &          2 &          2 &          1 &          0 \\
        \bottomrule
        \end{tabular}
    \caption{This table gives descriptive information on columns from Folktables used in the private data synthesis task. All columns are fully represented (count $=196967$).}
    \label{tab:mean_data}
\end{table*}

\begin{table*}[]
    \centering
    \begin{tabular}{llr}
    \toprule
       \textbf{SEX} &                                              \textbf{RAC1P} &  \textbf{Count} \\
    \midrule
    Female &                                              White &  70881 \\
      Male &                                              White &  67593 \\
    Female &                                              Black &  12947 \\
      Male &                                              Black &  11077 \\
    Female &                                              Asian &   8900 \\
      Male &                                              Asian &   8130 \\
    Female &                              Some other race alone &   5668 \\
      Male &                              Some other race alone &   5296 \\
    Female &                                  Two or more races &   3026 \\
      Male &                                  Two or more races &   2620 \\
      Male &                              American Indian alone &    277 \\
    Female &                              American Indian alone &    231 \\
      Male & American Indian and/or Alaska Native (tribes no... &    125 \\
    Female & American Indian and/or Alaska Native (tribes no... &    119 \\
      Male &                                    Native Hawaiian &     42 \\
    Female &                                    Native Hawaiian &     30 \\
    Female &                                Alaska Native alone &      3 \\
      Male &                                Alaska Native alone &      2 \\
    \bottomrule
    \end{tabular}
    \caption{This table provides the exact group size breakdown for the ACSEmployment task from Folktables (New York data), used in the synthetic data experiments.}
    \label{tab:my_label}
\end{table*}

Below we also give descriptions of each Census variable used, and interpretations of their values. These are taken directly from \cite{ding2021retiring}, who in turn took them from public ACS documentation.
\begin{enumerate}
    \item AGEP (Age): Range of values:
    \begin{itemize}
        \item 0 - 99 (integers)
        \item 0 indicates less than 1 year old.
    \end{itemize}
    \item SCHL (Educational attainment): Range of values:
    \begin{itemize}
        \item N/A (less than 3 years old)
        \item 1: No schooling completed
        \item 2: Nursery school/preschool
        \item 3: Kindergarten
        \item 4: Grade 1
        \item 5: Grade 2
        \item 6: Grade 3
        \item 7: Grade 4
        \item 8: Grade 5
        \item 9: Grade 6
        \item 10: Grade 7
        \item 11: Grade 8
        \item 12: Grade 9
        \item 13: Grade 10
        \item 14: Grade 11
        \item 15: 12th Grade - no diploma
        \item 16: Regular high school diploma
        \item 17: GED or alternative credential
        \item 18: Some college but less than 1 year
        \item 19: 1 or more years of college credit but no degree
        \item 20: Associate's degree
        \item 21: Bachelor's degree
        \item 22: Master's degree
        \item 23: Professional degree beyond a bachelor's degree
        \item 24: Doctorate degree
    \end{itemize}
    \item MAR (Marital status): Range of values:
    \begin{itemize}
        \item 1: Married
        \item 2: Widowed
        \item 3: Divorced
        \item 4: Separated
        \item 5: Never married or under 15 years old
    \end{itemize}
    \item WKHP (Usual hours worked per week past 12 months): Range of values:
    \begin{itemize}
        \item N/A (less than 16 years old / did not work during the past 12 months)
        \item 1 - 98 integer valued: usual hours worked
        \item 99: 99 or more usual hours
    \end{itemize}
    \item SEX (Sex): Range of values:
    \begin{itemize}
        \item 1: Male
        \item 2: Female
    \end{itemize}
    \item RAC1P (Recoded detailed race code): Range of values:
    \begin{itemize}
        \item 1: White alone
        \item 2: Black or African American alone
        \item 3: American Indian alone
        \item 4: Alaska Native alone
        \item 5: American Indian and Alaska Native tribes specified, or American Indian or Alaska Native, not specified and no other races
        \item 6: Asian alone
        \item 7: Native Hawaiian and Other Pacific Islander alone
        \item 8: Some Other Race alone
        \item 9: Two or More Races
    \end{itemize}
    \item DIS (Disability recode): Range of values:
    \begin{itemize}
        \item 1: With a disability
        \item 2: Without a disability
    \end{itemize}
    \item ESP (Employment status of parents): Range of values:
    \begin{itemize}
        \item N/A (not own child of householder, and not child in subfamily)
        \item 1: Living with two parents: both parents in labor force
        \item 2: Living with two parents: Father only in labor force
        \item 3: Living with two parents: Mother only in labor force
        \item 4: Living with two parents: Neither parent in labor force
        \item 5: Living with father: Father in the labor force
        \item 6: Living with father: Father not in labor force
        \item 7: Living with mother: Mother in the labor force
        \item 8: Living with mother: Mother not in labor force
    \end{itemize}
    \item MIL (Military service): Range of values:
    \begin{itemize}
        \item N/A (less than 17 years old)
        \item 1: Now on active duty
        \item 2: On active duty in the past, but not now
        \item 3: Only on active duty for training in Reserves/National Guard
        \item 4: Never served in the military
    \end{itemize}
    \item DREM (Cognitive difficulty): Range of values:
    \begin{itemize}
        \item N/A (less than 5 years old)
        \item 1: Yes
        \item 2: No
    \end{itemize}
    \item PINCP (Total person's income): Range of values:
    \begin{itemize}
        \item integers between -19997 and 4209995 to indicate income in US dollars
        \item loss of \$19998 or more is coded as -19998.
        \item income of \$4209995 or more is coded as 4209995.
    \end{itemize}
    \item ESR (Employment status recode): an individual's label is 1 if ESR == 1, and 0 otherwise.
    \item JWMNP (Travel time to work): Range of values:
    \begin{itemize}
        \item N/A (not a worker or a worker that worked at home)
        \item integers 1 - 200 for minutes to get to work
        \item top-coded at 200 so values above 200 are coded as 200
    \end{itemize}
    \item WKHP (Usual hours worked per week past 12 months): Range of values:
    \begin{itemize}
        \item N/A (less than 16 years old / did not work during the past 12 months)
        \item 1 - 98 integer valued: usual hours worked
        \item 99: 99 or more usual hours
    \end{itemize}
\end{enumerate}

\begin{figure*}
    \centering
    \includegraphics[width=0.45\textwidth]{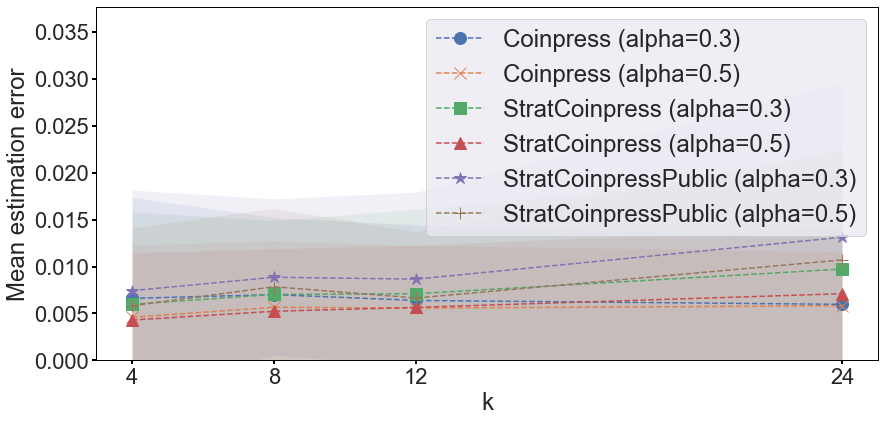}
    \caption{Comparing mean estimation error of \textsc{Coinpress} and \textsc{StratCoinpress}, over 50 runs. Mean estimation error, varying the number of groups $k$ in our synthetic Gaussian mixture. Note that error is low and relatively stable across reasonable values of $k$ groups in U.S. Census data.}
    \label{fig:error_vs_k}
\end{figure*}

\begin{figure*}[ht]
    \centering
    \subfloat{
        \includegraphics[width=\textwidth]{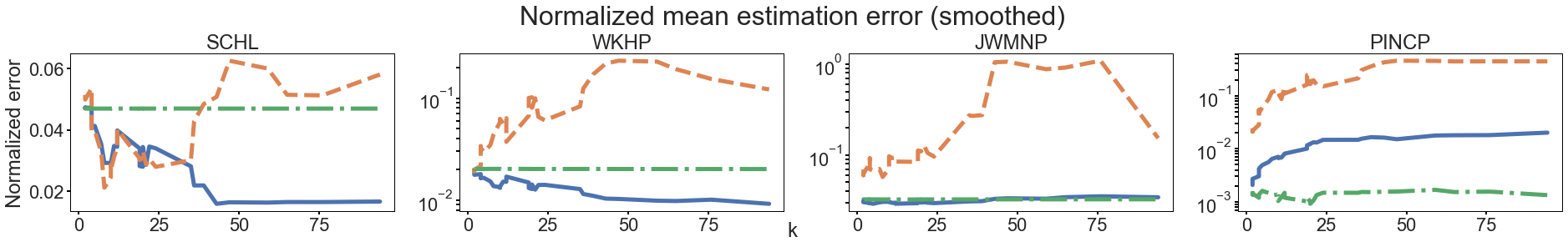}
        \label{fig:coinpress_mean_error}
    }
    \hfill
    \subfloat{
        \includegraphics[width=\textwidth]{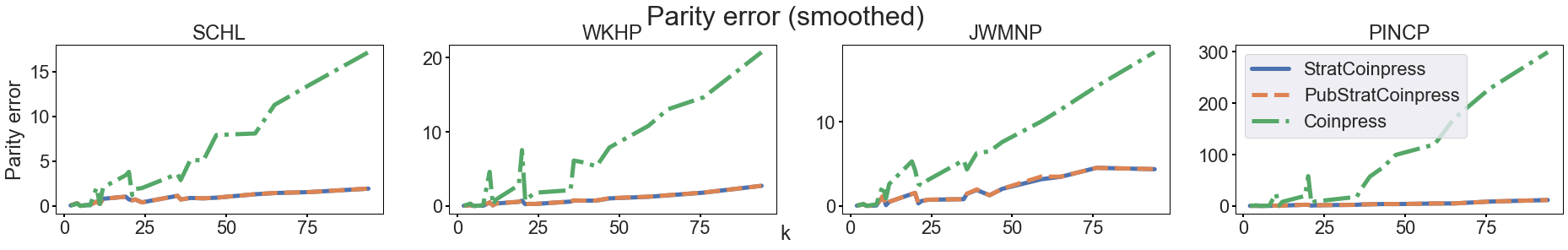}
        \label{fig:coinpress_parity_error}
    }
    \caption{(Full version of~\ref{fig:folktables_coinpress_results}) Comparisons of \textsc{Coinpress} and \textsc{StratCoinpress} on U.S. Census data from Folktables for the Mobility task \cite{ding2021retiring}; $\texttt{SCHL}$ is years spent in school (ordinal, 0-24), $\texttt{WKHP}$ is hours worked per week (ordinal, 0-168), $\texttt{JWMNP}$ is average drive time (ordinal, 0-400) and $\texttt{PINCP}$ is income (continuous). Top row shows normalized mean estimation error, bottom row shows parity error, both as the number of groups $k$ increases.}
    \label{fig:folktables_coinpress_results_full}
\end{figure*}

\begin{figure*}
\centering
\caption{Parity error, where $f$ is the 3-way marginal function \cite{mckenna2022aim}. On left, error when strata are just the different $race$ groups. On right, error for intersectional $sex$ and  $race$ groups.}
\includegraphics[width=0.7\textwidth]{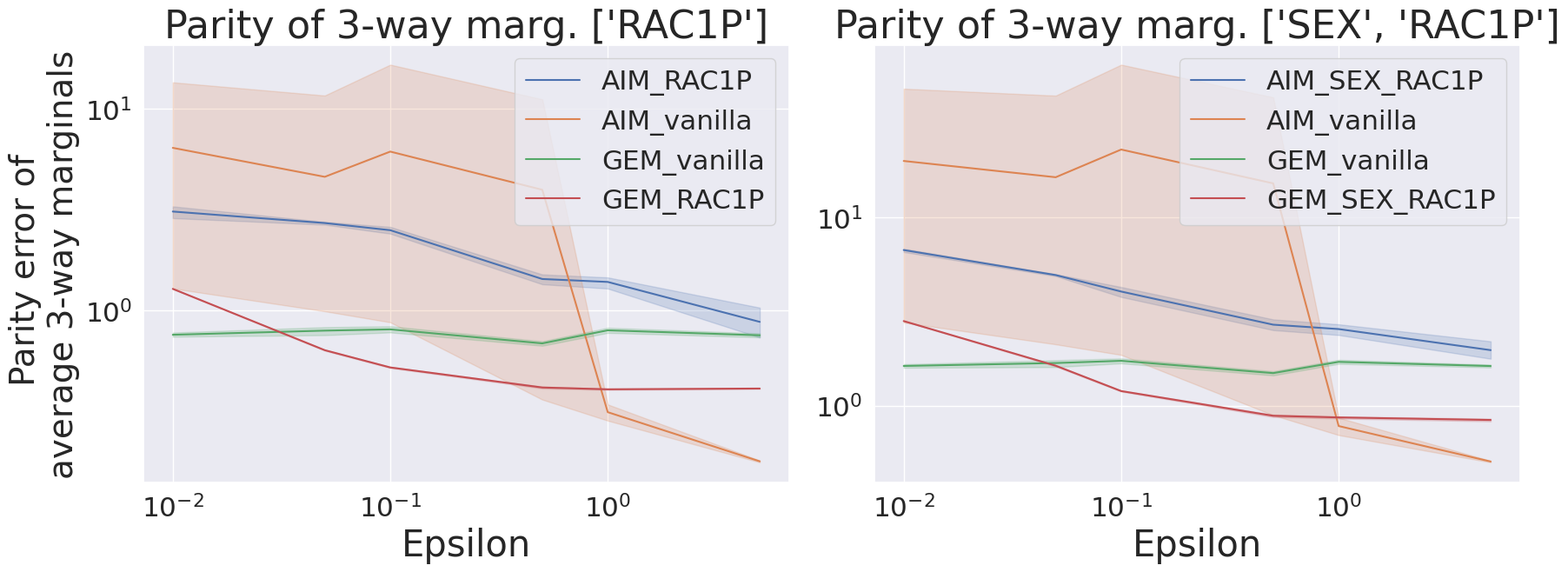}
\label{fig:marginals}
\end{figure*}
\newpage
\begin{figure*}[]
    \centering
    \subfloat{\includegraphics[width=.33\textwidth]{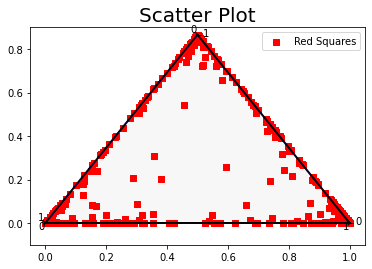}}
    \subfloat{\includegraphics[width=.33\textwidth]{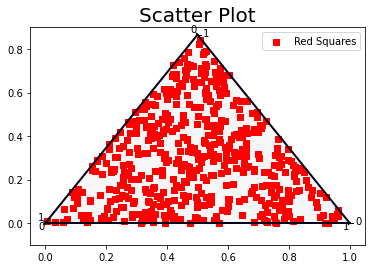}}
    \subfloat{\includegraphics[width=.33\textwidth]{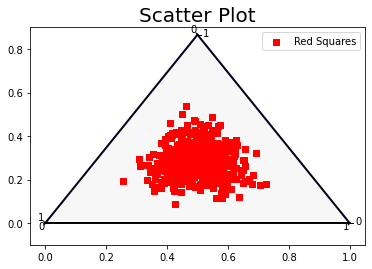}}
    \caption{$k$=3 random vectors, $ \mathcal{D}(\alpha=\{0.1,1,10\})$}
    \label{fig:dir}
\end{figure*}
\section{The Dirichlet distribution}
\label{sec:supp_notes}
A Dirichlet distribution ($\mathcal{D}$) has a probability density function given by the following, where $k \geq 2$, $\alpha_i > 0$, and $B$ is a beta function expressed in terms of a gamma function
\begin{align}
    f(x_1,...,x_k; \alpha_1,...,\alpha_k) = \frac{1}{B(\{\alpha_i\})}\prod_{i=1}^k x_i^{\alpha_i -1} \\
    B(\{\alpha_i\}) = \frac{\prod_{i=1}^k \Gamma(\alpha_i)}{\Gamma(\sum^k \alpha_i)}
\end{align}
A special case, and the one we primarily consider here, is the symmetric Dirichlet distribution, when $\alpha_1,...,\alpha_k=\alpha$.
\begin{align}
    f(x_1,...,x_k; \alpha) =  \frac{\Gamma(\alpha k)}{\Gamma( \alpha)^k}
    \prod_{i=1}^k x_i^{\alpha -1}
\end{align}
Canonically, the Dirichlet distribution is known as the ``distribution over distributions,'' and is used to draw a random vector of size $k$, where the entries of that vector sum to 1. The distribution of those entries is governed, in the symmetric case, by a single parameter $\alpha$. When $\alpha < 1$, the magnitude of the $k$ entries in the vector are distributed less evenly, and when $\alpha > 1$ we see a more even distribution. We can visualize this nicely when $k=3$ by drawing a set of 100 random dirichlet vectors and plotting them in Figure~\ref{fig:dir}.

\end{document}